\title{The Dimension Spectrum Conjecture for Planar Lines}
\author{
D. M. Stull\\
	Department of Computer Science\\
Northwestern University\\
	\texttt{donald.stull@northwestern.edu}
}
\date{}
\newtheorem{thm}{Theorem}
\newtheorem{prop}[thm]{Proposition}
\newtheorem{cor}[thm]{Corollary}
\newtheorem{lem}[thm]{Lemma}
\theoremstyle{remark}
\DeclareMathOperator{\Dim}{Dim}
\DeclareMathOperator{\spec}{sp}
\newcommand{\R}{\mathbb{R}}
\newcommand{\N}{\mathbb{N}}
\newcommand{\Q}{\mathbb{Q}}
\newcommand{\ve}{\varepsilon}
\newcommand{\uhr}{{\upharpoonright}}
\begin{document}
	\maketitle

\begin{abstract}
Let $L_{a,b}$ be a line in the Euclidean plane with slope $a$ and intercept $b$. The dimension spectrum $\spec(L_{a,b})$ is the set of all effective dimensions of individual points on $L_{a,b}$. Jack Lutz, in the early 2000s posed the \textit{dimension spectrum conjecture}. This conjecture states that, for every line $L_{a,b}$, the spectrum of $L_{a,b}$ contains a unit interval.

In this paper we prove that the dimension spectrum conjecture is true. Specifically, let $(a,b)$ be a slope-intercept pair, and let $d = \min\{\dim(a,b), 1\}$. For every $s \in [0, 1]$, we construct a point $x$ such that $\dim(x, ax + b) = d + s$. Thus, we show that $\spec(L_{a,b})$ contains the interval $[d, 1+ d]$. 
\end{abstract}

\section{Introduction}
The effective dimension, $\dim(x)$, of a point $x\in \R^n$ gives a fine-grained measure of the algorithmic randomness of $x$. Effective dimension was first defined by J. Lutz \cite{Lutz03a}, and was originally used to quantify the sizes of complexity classes. Unsurprisingly, because of its strong connection to (classical) Hausdorff dimension, effective dimension has proven to be geometrically meaningful \cite{GuLutMayMos14,Mayordomo08,DouLutMauTeut14, LutMay08}. Indeed, an exciting line of research has shown that one can prove classical results in geometric measure theory using effective dimension \cite{LutLut17, Lutz17, LutStu17, LutStu18}. Importantly, these are not effectivizations of known results, but new results whose \textit{proofs} rely on effective methods. Thus, it is of considerable interest to investigate the effective dimensions of points of geometric objects such as lines.

Let $L_{a,b}$ be a line in the Euclidean plane with slope $a$ and intercept $b$. Given the point-wise nature of effective dimension, one can study the \textit{dimension spectrum} of $L_{a,b}$. That is, the set
\begin{center}
$\spec(L_{a,b}) = \{\dim(x,ax+b) \, | x \in \R\}$
\end{center}
of all effective dimensions of points on $L_{a,b}$. In the early 2000s, Jack Lutz posed the \textit{dimension spectrum conjecture} for lines. That is, he conjectured that the dimension spectrum of every line in the plane contains a unit interval.  

The first progress on this conjecture was made by Turetsky. 
\begin{thm}[Turetsky \cite{Turetsky11}]\label{thm:Turetsky}
The set of points $x \in \R^n$ with $\dim(x) = 1$ is connected.
\end{thm}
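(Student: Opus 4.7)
The plan is to prove path-connectedness, which implies connectedness. The statement really needs $n \geq 2$: for $n = 1$, the set $\{x \in \R : \dim(x) = 1\}$ has full Lebesgue measure but excludes the computables and many other thin null classes, so it is not an interval and is therefore disconnected. All the content is in dimensions $n \geq 2$.

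Given $x, y \in \R^n$ with $\dim(x) = \dim(y) = 1$, I would first reduce to $\R^2$ by restricting to an affine $2$-plane $P$ through $x$ and $y$ whose orientation is random relative to $x \oplus y$. By the point-to-set principle a generic such plane preserves the effective dimensions of $x$ and $y$, so it suffices to construct a continuous path inside $P \cap D_1$, where $D_1 := \{z : \dim(z) = 1\}$.

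In the plane I would build the path by concatenating short segments, each lying inside $D_1$. The key structural fact is: if $z \in D_1$ and $v$ is a direction random relative to $z$, then the line $\ell_{z,v} = \{z + tv : t \in \R\}$ carries many dimension-$1$ points. A Lutz--Stull style projection estimate yields an upper bound $\dim(z+tv) \leq 1 + \dim(t \mid z \oplus v)$, while a matching lower bound exploits the randomness of $v$ relative to $z$. Choosing $t$ to be small, nonzero, and computable from $z \oplus v$ forces $z + tv \in D_1$. Iterating, one can steer from any point of $D_1$ to any other by a polygonal path whose vertices are in $D_1$; with additional care the corners can be smoothed so that the path is genuinely continuous and every point on it has dimension exactly~$1$.

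The main obstacle is \emph{uniform} control of dimension along the curve. Effective dimension is upper semicomputable but badly discontinuous, so ensuring $\dim(\gamma(t)) = 1$ for \emph{every} $t \in [0,1]$ — not merely on a dense or full-measure set of $t$ — is the delicate point. The upper bound needs $\gamma(t)$ to be well-approximable from a dimension-$1$ oracle together with $t$; the lower bound needs the tangent direction to remain random relative to the current position. Coordinating both bounds simultaneously, so that $K_r(\gamma(t)) = r \pm o(r)$ holds uniformly in $t$ along the entire concatenated path, is the technical heart of the argument, and is where I expect the most careful oracle construction to be required.
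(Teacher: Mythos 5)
The paper does not prove Theorem~\ref{thm:Turetsky}; it cites Turetsky's result. Evaluating your proposal on its own terms, it has a fatal structural gap: the set $D_1=\{z\in\R^2:\dim(z)=1\}$ contains \emph{no} nondegenerate line segment, so a polygonal path with segments ``each lying inside $D_1$'' does not exist. To see this, write a nonvertical segment as $S=\{(t,at+b):t\in I\}$. If $\dim(a,b)>0$, pick $t^*\in I$ Martin-L\"of random relative to $(a,b)$ (these are dense in $I$); then the Lutz--Stull lower bound gives $\dim(t^*,at^*+b)\geq 1+\min\{\dim(a,b),1\}>1$, so $S\not\subseteq D_1$. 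If $\dim(a,b)=0$, pick a rational $t\in I$; then along the infinitely many $r$ with $K_r(a,b)=o(r)$ we get $K_r(t,at+b)\leq K_r(a,b)+O(\log r)=o(r)$, so $\dim(t,at+b)=0$ and again $S\not\subseteq D_1$. Vertical segments fail by the same two-case analysis applied to $\{(c,t)\}$.

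Your own phrasing reveals where this bites: you propose ``a polygonal path whose vertices are in $D_1$'' and then ``smoothing the corners.'' But the obstruction is not at the corners — it is on the interior of every segment. Producing a dense scatter of dimension-$1$ vertices is the easy part; constructing a \emph{continuous} arc each of whose points has dimension exactly $1$ is the entire content of the theorem, and it cannot be done by local repairs to a piecewise-linear curve. Turetsky's actual construction does not interpolate linearly at all: the binary expansion of $\gamma(t)$ is engineered directly as a function of $t$, so that $K_r(\gamma(t))=r\pm o(r)$ is enforced at every precision and every parameter value simultaneously; the resulting curve is genuinely curvilinear (staircase-like), not a smoothed polygon. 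Your closing paragraph correctly names the difficulty of uniform two-sided control of $K_r(\gamma(t))$, but treats it as a technical wrinkle inside the polygonal framework when in fact it rules that framework out.

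Two secondary issues. The claim ``choosing $t$ small and computable from $z\oplus v$ forces $z+tv\in D_1$'' is unjustified: computability of $t$ from $(z,v)$ only gives $\dim(z+tv)\leq\dim(z,v)$, which can be as large as $2$, not $\leq 1$. And the reduction to $\R^2$ via a plane ``random relative to $x\oplus y$'' is dubious as stated — passing to intrinsic coordinates in a non-computable plane changes the effective dimension of a point, so one would need a genuine (and nontrivial) projection-type lemma rather than an appeal to the point-to-set principle. Finally, your observation that the statement requires $n\geq 2$ is correct; for $n=1$ the level set $\{x:\dim(x)=1\}$ omits every rational and is therefore not an interval, hence disconnected.
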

This immediately implies that $1 \in \spec(L_{a,b})$ for every line $L_{a,b}$. The next progress on the dimension spectrum conjecture was by Lutz and Stull \cite{LutStu17}. They showed that the effective dimension of points on a line is intimately connected to problems in fractal geometry. Among other things, they proved that $1 +d \in \spec(L_{a,b})$ for every line $L_{a,b}$, where $d = \min\{\dim(a,b), 1\}$. Shortly thereafter, Lutz and Stull \cite{LutStu17b} proved the dimension spectrum conjecture for the special case where the effective dimension and strong dimension of $(a,b)$ agree.

In this paper, we prove that dimension spectrum conjecture is true. For every $s \in (0, 1)$, we construct a point $x$ such that $\dim(x, ax + b) = d + s$, where $d = \min\{\dim(a,b), 1\}$. This, combined with the results of Lutz and Stull, imply that 
\begin{center}
$[d, 1 + d] \subseteq \spec(L_{a,b})$, 
\end{center}
for every planar line $L_{a,b}$. The proof of the conjecture builds on the techniques of \cite{LutStu17}. The primary difficulty of the conjecture is the case when the dimension of $x$ is less than the difficulty of the line $(a,b)$. We expand on the nature of this $\dim(x) < \dim(a,b)$ obstacle in Section \ref{ssec:previousworksub}. Our main technical contribution is showing how to overcome this difficulty by encoding the information of $a$ into our point $x$. Further complications arise in the ``high-dimensional" case, i.e., when $\dim(a,b) > 1$. In this case, we combine the encoding idea with a non-constructive argument. 

Apart from its intrinsic interest, recent work has shown that the effective dimensions of points has deep connections to problems in classical analysis \cite{Lutz17, LutStu17, LutStu18, Stull18, LutLut20}. Lutz and Lutz \cite{LutLut17} proved the point-to-set principle, which characterizes the Hausdorff dimension of a \textit{set} by effective dimension of its \textit{individual points}. Lutz and Stull \cite{LutStu17}, using the point-to-set principle, showed that lower bounds on the \textit{effective} dimensions of points on a line are intimately related to well-known problems of classical geometric measure theory such the Kakeya and Furstenberg conjectures.

The structure of the paper is as follows. In Section \ref{sec:preliminaries}, we recall the basic definitions and results of Kolmogorov complexity and effective dimension we need. In Section \ref{sec:previouswork}, we recall the strategy of Lutz and Stull \cite{LutStu17} to give strong lower bounds on the effective dimension of points on a line. In Sections \ref{sec:previouswork} and \ref{ssec:previousworksub} we give intuition about this strategy, and discuss why it is not enough to settle the dimension spectrum conjecture. 

In Section \ref{sec:lowdim}, we prove the dimension spectrum conjecture for lines with effective dimension at most one. We also give a brief overview of this proof, and how it overcomes the strategy discussed in Section \ref{sec:previouswork}. In Section \ref{sec:highdim}, we prove the dimension spectrum conjecture for lines with effective dimension greater than one. We also give intuition of this proof, and how it overcomes the difficulties when the line is high-dimensional. 

Finally, in the conclusion, we discuss open questions and avenues for future research.

\section{Preliminaries}\label{sec:preliminaries}
The \emph{conditional Kolmogorov complexity} of a binary string $\sigma\in\{0,1\}^*$ given binary string $\tau\in\{0,1\}^*$ is 
		\[K(\sigma|\tau)=\min_{\pi\in\{0,1\}^*}\left\{\ell(\pi):U(\pi,\tau)=\sigma\right\}\,,\]
		where $U$ is a fixed universal prefix-free Turing machine and $\ell(\pi)$ is the length of $\pi$. The \emph{Kolmogorov complexity} of $\sigma$ is $K(\sigma)=K(\sigma|\lambda)$, where $\lambda$ is the empty string. Thus, the Kolmogorov complexity of a string $\sigma$ is the minimum length program which, when run on a universal Turing machine, eventually halts and outputs $\sigma$. We stress that the choice of universal machine effects the Kolmogorov complexity by at most an additive constant (which, especially for our purposes, can be safely ignored). See \cite{LiVit08,Nies09,DowHir10} for a more comprehensive overview of Kolmogorov complexity.

We can extend these definitions to Euclidean spaces by introducing ``precision" parameters~\cite{LutMay08,LutLut17}. Let $x\in\R^m$, and $r,s\in\N$. The \emph{Kolmogorov complexity of $x$ at precision $r$} is
\[K_r(x)=\min\left\{K(p)\,:\,p\in B_{2^{-r}}(x)\cap\Q^m\right\}\,.\]
The \emph{conditional Kolmogorov complexity of $x$ at precision $r$ given $q\in\Q^m$} is
\[\hat{K}_r(x|q)=\min\left\{K(p)\,:\,p\in B_{2^{-r}}(x)\cap\Q^m\right\}\,.\]
The \emph{conditional Kolmogorov complexity of $x$ at precision $r$ given $y\in\R^n$ at precision $s$} is
\[K_{r,s}(x|y)=\max\big\{\hat{K}_r(x|q)\,:\,q\in B_{2^{-r}}(y)\cap\Q^n\big\}\,.\]
We abbreviate $K_{r,r}(x|y)$ by $K_r(x|y)$.

		The \emph{effective Hausdorff dimension} and \emph{effective packing dimension}\footnote{Although effective Hausdorff was originally defined by J. Lutz~\cite{Lutz03b} using martingales, it was later shown by Mayordomo~\cite{Mayordomo02} that the definition used here is equivalent. For more details on the history of connections between Hausdorff dimension and Kolmogorov complexity, see~\cite{DowHir10,Mayordomo08}.} of a point $x\in\R^n$ are
	\[\dim(x)=\liminf_{r\to\infty}\frac{K_r(x)}{r}\quad\text{and}\quad\Dim(x) = \limsup_{r\to\infty}\frac{K_r(x)}{r}\,.\]
	Intuitively, these dimensions measure the density of algorithmic information in the point $x$.

	By letting the underlying fixed prefix-free Turing machine $U$ be a universal \emph{oracle} machine, 
	we may \emph{relativize} the definition in this section to an arbitrary oracle set $A \subseteq \N$. The definitions of $K^A_r(x)$, $\dim^A(x)$, $\Dim^A(x)$, etc. are then all identical to their unrelativized versions, except that $U$ is given oracle access to $A$. Note that taking oracles as subsets of the naturals is quite general. We can, and frequently do, encode a point $y$ into an oracle, and consider the complexity of a point \textit{relative} to $y$. In these cases, we typically forgo explicitly referring to this encoding, and write e.g. $K^y_r(x)$.

Among the most used results in algorithmic information theory is the \textit{symmetry of information}. In Euclidean spaces, this was first proved, in a slightly weaker form in \cite{LutLut17}, and in the form presented below in \cite{LutStu17}.
		\begin{lem}\label{lem:unichain}
			For every $m,n\in\N$, $x\in\R^m$, $y\in\R^n$, and $r,s\in\N$ with $r\geq s$,
			\begin{enumerate}
				\item[\textup{(i)}]$\displaystyle |K_r(x|y)+K_r(y)-K_r(x,y)\big|\leq O(\log r)+O(\log\log \|y\|)\,.$
				\item[\textup{(ii)}]$\displaystyle |K_{r,s}(x|x)+K_s(x)-K_r(x)|\leq O(\log r)+O(\log\log\|x\|)\,.$
			\end{enumerate}
		\end{lem}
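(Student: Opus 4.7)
The plan is to reduce both parts of the lemma to the classical symmetry of information for Kolmogorov complexity of strings (equivalently, of rational vectors), which states that
\[K(\sigma,\tau) = K(\sigma) + K(\tau\mid\sigma) + O(\log K(\sigma,\tau))\]
for all $\sigma,\tau\in\{0,1\}^*$. Once this is in hand, the substance of the proof is passing back and forth between this statement and the Euclidean-precision definitions $K_r(\cdot)$, $\hat{K}_r(\cdot\mid q)$, and $K_{r,s}(\cdot\mid\cdot)$, while accounting carefully for the \emph{maximum over rationals} $q\in B_{2^{-r}}(y)$ that appears in the definition of $K_{r,s}(x\mid y)$.

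For part (i) I would prove the two inequalities separately. For the upper bound on $K_r(x,y)$: let $q^\ast$ be an optimal rational approximation of $y$ at precision $r$, so $K(q^\ast)=K_r(y)$. The definition of $K_{r,r}(x\mid y)$ as a maximum over such $q$ gives in particular $\hat K_r(x\mid q^\ast)\leq K_r(x\mid y)$, so there is a rational $p^\ast\in B_{2^{-r}}(x)$ with $K(p^\ast\mid q^\ast)\leq K_r(x\mid y)$. Classical symmetry of information applied to the pair $(p^\ast,q^\ast)$ then yields $K(p^\ast,q^\ast)\leq K(q^\ast)+K(p^\ast\mid q^\ast)+O(\log r)$, and since $(p^\ast,q^\ast)$ is a rational approximation of $(x,y)$ at precision $r$ up to a harmless constant, this gives $K_r(x,y)\leq K_r(x\mid y)+K_r(y)+O(\log r)$. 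For the reverse direction, I take an optimal rational pair $(p',q')$ witnessing $K_r(x,y)$, read off $K_r(y)\leq K(q')+O(1)$, and then bound $\hat K_r(x\mid q)$ for \emph{every} $q\in B_{2^{-r}}(y)\cap\Q^n$ by describing $q'$ from $q$ using $O(\log r)+O(\log\log\|y\|)$ bits: any two such rationals agree at scale $2^{-r}$, so their difference is specified by the $O(\log r)$ low-order bits, while $O(\log\log\|y\|)$ is needed to encode the magnitude at which this offset is to be interpreted.

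Part (ii) follows the same template, with the role of ``$y$'' played by a rough rational approximation of $x$ at precision $s$. For the upper bound $K_r(x)\leq K_s(x)+K_{r,s}(x\mid x)+O(\log r)$: pick an optimal $q^\ast\in B_{2^{-s}}(x)\cap\Q^m$ with $K(q^\ast)=K_s(x)$; the max-definition of $K_{r,s}(x\mid x)$ produces a rational $p^\ast\in B_{2^{-r}}(x)$ with $K(p^\ast\mid q^\ast)\leq K_{r,s}(x\mid x)$; classical symmetry of information then gives $K_r(x)\leq K(p^\ast)\leq K_s(x)+K_{r,s}(x\mid x)+O(\log r)$. For the reverse inequality, take an optimal $p'$ for $K_r(x)$, obtain a rough rational approximation by truncation to get $K_s(x)\leq K(p')+O(\log r)+O(\log\log\|x\|)$, and for every $q\in B_{2^{-s}}(x)\cap\Q^m$ describe $p'$ from $q$ by first converting $q$ into that canonical truncation of $p'$ (using $O(\log r)+O(\log\log\|x\|)$ bits, as above) and then invoking the program witnessing $K(p'\mid\mathrm{trunc}(p'))$.

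The main technical obstacle in both parts is the universal quantifier over rationals $q$ hidden in $K_{r,s}(x\mid y)=\max_{q\in B_{2^{-r}}(y)\cap\Q^n}\hat K_r(x\mid q)$. Classical symmetry of information conditions on a \emph{fixed} string, but here the bound must hold uniformly over all nearby rationals. This is precisely what forces the $O(\log r)$ slack, which pays for encoding the low-order offset between any two admissible rationals, and the $O(\log\log\|y\|)$ slack, which pays for the integer scale needed to parse that offset as a magnitude-sensitive Euclidean displacement rather than a bare bit string. Once this ``offset translation'' is packaged as a general-purpose sublemma, both (i) and (ii) reduce to a short application of classical symmetry of information plus bookkeeping.
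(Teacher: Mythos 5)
The paper does not actually prove Lemma~\ref{lem:unichain}; it is stated without proof and attributed to \cite{LutLut17,LutStu17}, so there is no internal argument to compare against. That said, your proposal follows the same general line as the cited source: reduce to the classical symmetry of information for strings/rationals and then account for the gap between ``a single optimal rational'' and ``all rationals near $y$.'' The high-level plan is right, but there is a genuine gap in the key sublemma you lean on.

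The problematic step is the claim that for any two rationals $q,q'\in B_{2^{-r}}(y)\cap\Q^n$, one can describe $q'$ from $q$ using $O(\log r)+O(\log\log\|y\|)$ bits because ``their difference is specified by the $O(\log r)$ low-order bits.'' This is false as stated: the two rationals agree to \emph{magnitude} $2^{-r}$, but their difference can have arbitrarily high precision and hence arbitrarily high Kolmogorov complexity. Concretely, taking $q'$ to be $y$ truncated to $10^{100}$ bits gives a $q'\in B_{2^{-r}}(y)$ with $K(q'\mid q)$ enormous. Since your reverse inequality in part (i) picks $(p',q')$ to be an arbitrary $K$-minimizing pair and then tries to reach $q'$ from $q$, the argument breaks at this point: nothing forces the optimal $q'$ to be reachable cheaply from $q$.

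The fix, which you implicitly gesture at in your part (ii) sketch but do not set up, is to replace the arbitrary optimal rational by the \emph{canonical truncation}: work with $p'=x\uhr r$ and $q'=y\uhr r$ rather than with $K$-minimizers, and justify this substitution via Lemma~\ref{lem:trunc} and its Corollaries~\ref{cor:trunc},~\ref{cor:trunccond} (i.e., that $K_r(z)$ and $K(z\uhr r)$ agree up to $K(r)+O(1)$). Then the correct sublemma is: for every $q\in B_{2^{-r}}(y)\cap\Q^n$, one can compute $y\uhr r$ from $q$ with $O(\log r)+O(\log\log\|y\|)$ bits, because truncating $q$ to $r$ bits lands within $O(1)$ units (at scale $2^{-r}$) of $y\uhr r$, and the $\log r$ and $\log\log\|y\|$ terms pay for specifying $r$ and the position of the binary point. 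With this adjustment both directions of (i) and (ii) go through along the lines you describe, but without the truncation lemma the ``offset translation'' step does not hold.
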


\subsection{Initial segments versus $K$-optimizing rationals}\label{sec:trunc}
For $x=(x_1,\ldots,x_n)\in\R^n$ and a precision $r\in\N$, let $x\uhr r=(x_1\uhr r,\ldots,x_n\uhr r)$, where each 
\begin{center}
$x_i\uhr r=2^{-r} \lfloor 2^r x_i\rfloor$
\end{center}
is the truncation of $x_i$ to $r$ bits to the right of the binary point. For $r\in(0,\infty)$, let $x\uhr r=x\uhr\lceil r\rceil$. 

We can relate the complexity $K_r(x)$ of $x$ at precision $r$ and the \emph{initial segment complexity} $K(x\uhr r)$ of the binary representation of $x$. Lutz and Stull \cite{LutStu17} proved the following lemma, and its corollaries, relating these two quantities. Informally, it shows that, up to a logarithmic error, the two quantities are equivalent.
	\begin{lem}\label{lem:trunc}
		For every $m,n\in\N$, there is a constant $c$ such that for all $x\in\R^m$, $p\in\Q^n$, and $r\in\N$,
		\[|\hat{K}_r(x|p)-K(x\uhr r\,|\,p)|\leq K(r)+c\,.\]
	\end{lem}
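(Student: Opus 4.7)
The plan is to establish the two inequalities implied by the absolute value separately, namely $\hat{K}_r(x|p) \leq K(x\uhr r \,|\, p) + O(1)$ and $K(x\uhr r \,|\, p) \leq \hat{K}_r(x|p) + K(r) + O(1)$, with the implicit constants depending only on the ambient dimension $m$. The first direction is essentially a one-line observation; the second, which must bridge between ``any rational close to $x$'' and ``the specific dyadic truncation of $x$,'' is the main obstacle.

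For the first inequality, I would exhibit $x\uhr r$ itself as the rational witnessing the minimum in the definition of $\hat{K}_r(x|p)$. Each coordinate satisfies $|x_i - x_i\uhr r| < 2^{-r}$, so $\|x - x\uhr r\|_2 < \sqrt{m}\cdot 2^{-r}$. Up to an $O_m(1)$ additive overhead (absorbing either the $\sqrt{m}$ factor in the radius or a mild rational shift needed to land inside $B_{2^{-r}}(x)$ exactly), any self-delimiting program that witnesses $K(x\uhr r \,|\, p)$ can be wrapped into a program outputting a valid rational witness of $\hat{K}_r(x|p)$.

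For the second inequality, let $q \in B_{2^{-r}}(x)\cap\Q^m$ be an optimizer for $\hat{K}_r(x|p)$. The crucial observation is that from $q$ and $r$ alone one can enumerate a constant-sized list of candidates for $x\uhr r$: in each coordinate, setting $k_i = \lfloor 2^r q_i\rfloor$, the inequality $|q_i - x_i| \leq 2^{-r}$ forces $x_i\uhr r \in \{(k_i-1)2^{-r},\, k_i 2^{-r},\, (k_i+1)2^{-r}\}$, giving at most $3^m$ candidate tuples in total. I would then construct a program for $x\uhr r$ given $p$ by concatenating (i) a self-delimiting encoding of $r$ of length $K(r) + O(1)$, (ii) a self-delimiting program for $q$ given $p$ of length $\hat{K}_r(x|p) + O(1)$, and (iii) a fixed-length advice string of $\lceil\log_2 3^m\rceil$ bits selecting $x\uhr r$ from the enumeration. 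The decoder reads $r$, reconstructs $q$, enumerates the candidates in a canonical order, and returns the one pointed to by the advice.

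The hard part is precisely the disambiguation in step (iii): from $q$ and $r$ alone the decoder truly cannot distinguish the correct candidate, since the choice depends on the fine-grained information of how each $x_i$ straddles a dyadic boundary. The resolution, and the whole content of the lemma, is the observation that this information is bounded by $O_m(1)$ bits independent of $r$, and so can be absorbed into the additive constant $c$ promised by the statement; the only term that must scale with $r$ is the explicit $K(r)$ needed to tell the decoder at which precision to truncate.
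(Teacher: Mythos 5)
Your argument is correct and is essentially the standard proof of this lemma (the paper imports it from \cite{LutStu17} without reproving it): the easy direction uses the truncation itself as a near-witness, and the substantive direction uses the $K(r)$ bits to recover the precision together with an $O_m(1)$-bit advice to resolve the $3^m$-fold ambiguity in $\lfloor 2^r x_i\rfloor$ given $q$. One small point worth tightening: for $m\geq 2$ under the Euclidean norm, $x\uhr r$ itself may lie just outside $B_{2^{-r}}(x)$, so the ``wrapping'' in your first direction must actually supply $O_m(1)$ additional bits of each $x_i$ (e.g.\ to produce $x\uhr(r+\lceil\log\sqrt m\rceil)$), rather than being a purely computational rearrangement of the program for $x\uhr r$ --- this is still $O_m(1)$ overhead, so your bound stands, but the phrase ``exhibit $x\uhr r$ itself as the witness'' is not literally accurate.
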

This has the following two useful corollaries.
\begin{cor}\label{cor:trunc}
		For every $m\in\N$, there is a constant $c$ such that for every $x\in\R^m$ and $r\in\N$,
		\[|K_r(x)-K(x\uhr r)|\leq K(r)+c\,.\]
	\end{cor}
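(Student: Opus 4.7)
My plan is to obtain Corollary~\ref{cor:trunc} as an immediate specialization of Lemma~\ref{lem:trunc} by trivializing the conditional side information. Lemma~\ref{lem:trunc} provides, for every $m, n \in \N$ and rational $p \in \Q^n$, the two-sided bound $|\hat{K}_r(x \mid p) - K(x \uhr r \mid p)| \leq K(r) + c$, so the corollary should be exactly the $n = 0$ (equivalently, $p = \lambda$) case.

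The one observation to make is that $K_r(x) = \hat{K}_r(x \mid \lambda)$ and $K(x \uhr r) = K(x \uhr r \mid \lambda)$ hold up to an additive constant, directly from the definitions in Section~\ref{sec:preliminaries}: the unconditional complexities $K_r(x)$ and $K(\sigma)$ are by definition the conditional complexities given the empty string (equivalently, given a fixed computable rational, which contributes only an $O(1)$ cost). Substituting $p = \lambda$ into Lemma~\ref{lem:trunc} therefore yields $|K_r(x) - K(x \uhr r)| \leq K(r) + c$ after absorbing this $O(1)$ overhead into the constant $c$, which is precisely the statement of the corollary.

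I do not anticipate any real obstacle. The substantive content lives entirely in Lemma~\ref{lem:trunc} (where the conversion between the $r$-truncation $x \uhr r$ and an arbitrary $2^{-r}$-approximation of $x$ is carried out, costing $K(r)$ bits to specify the precision); the corollary merely instantiates that statement with trivial conditioning. The only thing worth double-checking in the write-up is that the constant produced by Lemma~\ref{lem:trunc} does not secretly depend on $n$ in a way that would prevent setting $n = 0$, but by inspection of the lemma's quantifier structure it does not.
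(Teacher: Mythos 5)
Your derivation is correct and is exactly how the paper obtains the corollary: it is presented as an immediate consequence of Lemma~\ref{lem:trunc}, obtained by trivializing the conditional information (taking $p$ to be a fixed trivial rational, so $\hat{K}_r(x\mid p)=K_r(x)+O(1)$ and $K(x\uhr r\mid p)=K(x\uhr r)+O(1)$ by the convention $K(\sigma)=K(\sigma\mid\lambda)$), with the $O(1)$ overhead absorbed into $c$.
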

	\begin{cor}\label{cor:trunccond}
		For every $m,n\in\N$, there is a constant $c$ such that for all $x\in\R^m$, $y\in\R^n$, and $r,s\in\N$,
		\[|K_{r,s}(x|y)-K(x\uhr r\,|\,y\uhr s)|\leq K(r)+K(s)+c\,.\]
	\end{cor}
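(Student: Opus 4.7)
The plan is to combine Lemma \ref{lem:trunc} with a short translation between an arbitrary rational $q\in B_{2^{-s}}(y)\cap\Q^n$ witnessing $K_{r,s}(x|y)$ and the specific truncation $y\uhr s$ appearing on the right-hand side of the corollary.

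For the upper bound, I would take a rational $q\in B_{2^{-s}}(y)\cap\Q^n$ achieving the max in the definition of $K_{r,s}(x|y)$. Since $q$ and $y\uhr s$ both lie within $O(2^{-s})$ of $y$ in each coordinate, $y\uhr s$ is one of only $O(1)$ coordinate-wise $s$-bit truncations within that distance of $q$. Hence, given $q$, the value of $s$ (encoded in $K(s)+O(1)$ bits), and $O(1)$ bits of advice selecting the correct candidate, a Turing machine can recover $y\uhr s$. This yields $K(x\uhr r\mid q)\le K(x\uhr r\mid y\uhr s)+K(s)+O(1)$, and an application of Lemma \ref{lem:trunc} to convert $K(x\uhr r\mid q)$ back into $\hat{K}_r(x\mid q)$ costs an additional $K(r)+O(1)$. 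Taking the max over $q$ gives
\[K_{r,s}(x|y)\;\le\;K(x\uhr r\mid y\uhr s)+K(r)+K(s)+O(1).\]

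For the lower bound, I would argue that $y\uhr s$ itself (or a rational $O(1)$-equivalent to it) lies in $B_{2^{-s}}(y)\cap\Q^n$, so $\hat{K}_r(x\mid y\uhr s)\le K_{r,s}(x|y)+O(1)$; the reverse direction of Lemma \ref{lem:trunc} then gives $K(x\uhr r\mid y\uhr s)\le K_{r,s}(x|y)+K(r)+O(1)$, which is absorbed into the claimed bound.

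The only real subtlety---and the main obstacle worth flagging---is the dimensional $\sqrt n$ factor: the coordinate-wise bound $|y_i-y_i\uhr s|<2^{-s}$ only gives $\|y-y\uhr s\|<\sqrt n\cdot 2^{-s}$, so $y\uhr s$ can lie just outside $B_{2^{-s}}(y)$ when $n\ge 2$. I would dispose of this by fixing a nearby dyadic rational $\tilde q\in B_{2^{-s}}(y)\cap\Q^n$ within $O(2^{-s})$ of $y\uhr s$; from $s$ and $O(1)$ bits of advice a machine can pass between $\tilde q$ and $y\uhr s$, which only inflates the absolute constant $c$. Since the resulting $c$ depends only on $m$ and $n$, this is exactly what the statement allows.
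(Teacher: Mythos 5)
The paper states this corollary without proof (it appears right after Lemma \ref{lem:trunc} and is attributed, like that lemma, to Lutz and Stull \cite{LutStu17}), so there is no in-text argument to compare against; your derivation is correct and is the natural one. You correctly identify that two separate translations are needed: Lemma \ref{lem:trunc} handles the $x$-side conversion between $\hat{K}_r(x|\cdot)$ and $K(x\uhr r\,|\,\cdot)$ at cost $K(r)+O(1)$, while passing between an arbitrary $q\in B_{2^{-s}}(y)\cap\Q^n$ and the specific truncation $y\uhr s$ costs $K(s)+O(1)$ (via the chain rule $K(a|b)\le K(a|c)+K(c|b)+O(1)$ together with $K(y\uhr s\,|\,q)\le K(s)+O(1)$). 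The $\sqrt n$ subtlety you flag is real and worth noting, and your fix --- choosing a dyadic $\tilde q\in B_{2^{-s}}(y)$ within $O(2^{-s})$ of $y\uhr s$ and translating in both directions using $s$ plus $O(1)$ advice --- is exactly what is needed. One minor wording slip: in the lower-bound paragraph you write $\hat{K}_r(x\mid y\uhr s)\le K_{r,s}(x|y)+O(1)$, but for $n\ge 2$ this step carries the same $K(s)$ surcharge as in the upper bound (since passing from $\tilde q$ to $y\uhr s$ requires $s$); the claimed final bound is unaffected because it already includes $K(s)$.
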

	
\section{Previous Work}\label{sec:previouswork}
The proof of our main theorem will use the tools and techniques introduced by Lutz and Stull \cite{LutStu17}. In this section we will state the main lemmas needed for this paper. We will devote some time giving intuition about each lemma. In Subsection \ref{ssec:previousworksub}, we give an informal discussion on how to combine these lemmas to give bounds on the effective dimensions of points on a line. We will also discuss where these tools break down, motivating the techniques introduced in this paper.

The first lemma, informally, states the following. Suppose that $L_{a,b}$ intersects $(x, ax+b)$ and the complexity of $(a,b)$ is low (item (i)). Further assume that (item (ii)), if $L_{u,v}$ is any other line intersecting $(x, ax+b)$ such that $\|(a,b) - (u,v) \| < 2^{-m}$ then either
\begin{enumerate}
    \item $u,v$ is of high complexity, or
    \item $u,v$ is very close to $a,b$.
\end{enumerate}
Then it is possible to compute an approximation of $(a,b)$ given an approximation of $(x, ax+b)$ and first $m$ bits of $(a,b)$. Indeed, we can simply enumerate over all low complexity lines, since we know that the only candidate is essentially $(a,b)$.

\begin{lem}[Lutz and Stull \cite{LutStu17}]\label{lem:point}
				Suppose that $A \subseteq \N$,$a,b,x\in\R$, $m,r\in\N$, $\delta\in\R_+$, and $\ve,\eta\in\Q_+$ satisfy $r\geq \log(2|a|+|x|+5)+1$ and the following conditions.
				\begin{itemize}
					\item[\textup{(i)}]$K^A_r(a,b)\leq \left(\eta+\ve\right)r$.
					\item[\textup{(ii)}] For every $(u,v)\in B_{2^{-m}}(a,b)$ such that $ux+v=ax+b$, \[K^A_{r}(u,v)\geq\left(\eta-\ve\right)r+\delta\cdot(r- t)\,,\]
					whenever $t=-\log\|(a,b)-(u,v)\|\in(0,r]$.
				\end{itemize}
				Then,
				\[K_r^A(a,b, x) \leq K_r(x,ax+b) +  K_{m,r}(a,b \, | \, x,ax+b)+\frac{4\ve}{\delta}r+K(\ve, \eta)+O(\log r)\,.\]
			\end{lem}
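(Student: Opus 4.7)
The plan is to build a single $A$-oracle program that outputs a precision-$r$ rational approximation of $(a,b,x)$, and to check that its total length matches the right-hand side of the stated bound. I would concatenate four pieces: (1) a minimum-length program witnessing $K_r(x,ax+b)$, decoding to some $p_1=(x',y') \in B_{2^{-r}}(x,ax+b)\cap\Q^2$; (2) a minimum-length program witnessing $K_{m,r}(a,b\mid x,ax+b)$, which on input $p_1$ yields some $p_2 \in B_{2^{-m}}(a,b)\cap\Q^2$; (3) a $K(\ve,\eta)+O(\log r)$-bit header encoding the parameters and the decoder itself; and (4) a string of $\lceil 4\ve r/\delta\rceil$ reserved ``correction'' bits.

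Given these inputs, the decoder would reconstruct $p_1$ and $p_2$ and then enumerate all $A$-oracle programs $\sigma$ of length at most $(\eta+\ve)r + O(\log r)$. Each halting program is run as $U^A(\sigma)$; if the output is a rational $(u',v')$, the decoder performs two tests: (a) $\|(u',v')-p_2\| < 2^{-m+O(1)}$ and (b) $|u'x'+v'-y'| < 2^{-r+O(\log r)}$. Hypothesis (i) guarantees at least one candidate passes. The crux will be to show that \emph{any} surviving candidate $(u',v')$ lies within $2^{-(r-2\ve r/\delta)+O(\log r)}$ of $(a,b)$. To do this, I would set $u=u'$ and $v = ax+b-ux$, so that $(u,v)$ satisfies the exact incidence $ux+v=ax+b$ required by (ii); test (b), combined with $\|p_1-(x,ax+b)\|<2^{-r}$ and the normalization $r \geq \log(2|a|+|x|+5)+1$, yields $\|(u,v)-(u',v')\| \leq 2^{-r+O(\log r)}$, while test (a) places $(u,v) \in B_{2^{-m+O(1)}}(a,b)$. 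Hypothesis (ii) then gives $K_r^A(u,v) \geq (\eta-\ve)r + \delta(r-t)$ with $t = -\log\|(a,b)-(u,v)\|$, while a short argument combining $K^A(u',v') \leq (\eta+\ve)r + O(\log r)$ with Lemma \ref{lem:unichain}(ii) (upgrading from precision $r-O(\log r)$ to $r$ for a two-dimensional point) bounds $K_r^A(u,v) \leq (\eta+\ve)r + O(\log r)$. Subtracting gives $r-t \leq 2\ve r/\delta + O(\log r)/\delta$, and the triangle inequality transfers this to $(u',v')$ itself.

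Having narrowed $(a,b)$ to a ball of radius $2^{-\rho}$ around $(u',v')$ with $\rho = r-2\ve r/\delta - O(\log r)$, the decoder would consume the $\lceil 4\ve r/\delta\rceil$ reserved bits as the bitwise offset from $(u',v')$ to $(a,b)\uhr r$; since $(a,b) \in \R^2$, a budget of $2(r-\rho)$ bits suffices to refine from precision $\rho$ to precision $r$. Combined with $x'$ drawn from $p_1$, this produces an element of $B_{2^{-r+O(\log r)}}(a,b,x)\cap\Q^3$, and summing the lengths of the four pieces gives the claimed inequality once all logarithmic overheads are absorbed into the stated $O(\log r)$ term.

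The principal obstacle I anticipate is the mismatch between the approximate line $L_{u',v'}$ produced by the enumeration and the \emph{exact} incidence $ux+v=ax+b$ demanded by hypothesis (ii); the slope-matched replacement $(u,v)$ described above is the decisive technical move, and one has to verify carefully that the shift to $(u,v)$, together with the precision slack in converting between raw program length and $K_r^A$, costs only $O(\log r)$ bits rather than something worse. A secondary concern is ensuring that no spurious low-complexity line far from $(a,b)$ survives both tests; this is precisely the content of (a) paired with (ii), and deserves explicit verification in the formal write-up.
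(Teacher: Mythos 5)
Your proposal matches the proof of this lemma in Lutz and Stull \cite{LutStu17} (note that the present paper only cites the result and does not reprove it): encode $\ve,\eta$, a witness $p_1$ for $K_r(x,ax+b)$, a witness $p_2$ for $K_{m,r}(a,b\mid x,ax+b)$, and $\lceil 4\ve r/\delta\rceil$ correction bits; enumerate $A$-programs of length at most $(\eta+\ve)r+O(\log r)$ and filter their rational outputs $(u',v')$ by proximity to $p_2$ and near-incidence with $p_1$; pass to the slope-matched exactly incident line $(u,v)=(u',\,ax+b-u'x)$, apply hypotheses (i) and (ii), and conclude that any surviving candidate is $2^{-r+2\ve r/\delta+O(\log r)}$-close to $(a,b)$, which the correction bits then finish. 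Your identification of the replacement $(u',v')\mapsto(u,v)$ as the decisive step, and your precision-upgrade via Lemma~\ref{lem:unichain}(ii) to bound $K_r^A(u,v)$, are both exactly as in the cited argument.

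One point you wave at but should pin down in a full write-up: test (a) certifies only $\|(u',v')-(a,b)\|<2^{-m+O(1)}$, because $p_2$ itself is merely within $2^{-m}$ of $(a,b)$; hence the slope-matched $(u,v)$ handed to hypothesis (ii) need not lie in $B_{2^{-m}}(a,b)$ as (ii) literally demands, and this cannot be repaired simply by tightening the threshold (the decoder cannot test distance to $(a,b)$ itself). In every use of this lemma in the present paper, condition (ii) is secured through Lemma~\ref{lem:lines}, which holds on all of $B_1(a,b)$, so the $O(1)$ slack in radius is harmless in context; for a freestanding proof you should either invoke (ii) on a ball of radius $2^{-m+c}$ for a suitable absolute constant $c$, or build the constant into the statement. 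Apart from this constant-factor bookkeeping, the argument is sound and follows the cited one closely.
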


The second lemma which will be important in proving our main theorem is the following. It is essentially the approximation version of the simple geometric fact that any two lines intersect at a single point. In other words, if $ax + b = ux + v$ and you are given an approximation of $(a,b)$ and an approximation of $(u,v)$, then you can compute an approximation of $x$. Moreover, the quality of the approximation of $x$ depends linearly on the distance between $(u,v)$ and $(a,b)$.
\begin{lem}[\cite{LutStu17}]\label{lem:lines}
			Let $a,b,x\in\R$. For all $u,v\in B_1(a,b)$ such that $u x+v=ax+b$, and for all $r\geq t:=-\log\|(a,b)-(u,v)\|$,
			\[K_{r}(u,v)\geq K_t(a,b) + K_{r-t,r}(x|a,b)-O(\log r)\,.\]
		\end{lem}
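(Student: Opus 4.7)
The plan is to exploit the fact that the collinearity relation $ux + v = ax + b$ forces $b - v = (u - a)x$, which together with $\|(u,v) - (a,b)\| = 2^{-t}$ pins $|u - a|$ to $2^{-t}/\sqrt{1+x^2}$. This concrete value of $|u - a|$ allows one to recover $x$ from sufficiently good approximations of $(u,v)$ and $(a,b)$, while the closeness $\|(u,v) - (a,b)\| = 2^{-t}$ means that any approximation of $(u,v)$ already encodes $(a,b)$ up to precision $t$. I would combine these two observations via symmetry of information.

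First I would establish the computational reduction. Given any $p = (p_u, p_v) \in B_{2^{-r}}(u,v) \cap \Q^2$ and $q = (q_a, q_b) \in B_{2^{-r}}(a,b) \cap \Q^2$, form $x' = (q_b - p_v)/(p_u - q_a)$. Writing $p_u - q_a = (u-a) + \varepsilon_1$ and $q_b - p_v = (b-v) + \varepsilon_2$ with $|\varepsilon_i| \leq 2^{-r+1}$, a routine calculation using $|u - a| = 2^{-t}/\sqrt{1+x^2}$ yields $|x' - x| \leq O((1+|x|)^2) \cdot 2^{-(r-t)}$. Hence $\hat{K}_{r-t}(x \,|\, q) \leq \hat{K}_r(u,v \,|\, q) + O(\log r)$ for every such $q$, and taking the maximum over $q$ gives
\[K_{r-t,\,r}(x \,|\, a, b) \leq K_r(u, v \,|\, a, b) + O(\log r).\]

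Next I would apply Lemma \ref{lem:unichain}(i) to the pair $(u,v,a,b)$ in both directions and subtract, producing
\[K_r(u, v \,|\, a, b) = K_r(u, v) + K_r(a, b \,|\, u, v) - K_r(a, b) + O(\log r).\]
It remains to bound $K_r(a, b \,|\, u, v)$. Any $2^{-r}$-approximation of $(u,v)$ lies within $2\cdot 2^{-t}$ of $(a,b)$, so $(a,b)$ at precision $t$ is computable from $(u,v)$ at precision $r$ using only $O(1)$ extra bits. Combining this with Lemma \ref{lem:unichain}(ii), which supplies $K_{r,\,t}(a, b \,|\, a, b) = K_r(a, b) - K_t(a, b) + O(\log r)$, yields $K_r(a, b \,|\, u, v) \leq K_r(a, b) - K_t(a, b) + O(\log r)$. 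Substituting this into the identity above and chaining with the first-step bound rearranges to the desired inequality. The main obstacle is the error analysis in the first step: a priori $|u - a|$ could be much smaller than $2^{-t}$, but the collinearity constraint rescues us at the cost of an $O(\log(1+|x|))$ loss, which is absorbed into the $O(\log r)$ error under the standard convention that inputs have polynomially bounded magnitudes.
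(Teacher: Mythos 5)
Your proof is correct and follows essentially the same route as Lutz and Stull's original argument: use collinearity and $\|(u,v)-(a,b)\|=2^{-t}$ to recover $x$ at precision $r-t$ from $2^{-r}$-approximations of the two lines via the well-conditioned quotient $(b-v)/(u-a)$, recover $(a,b)$ at precision $t$ directly from any $2^{-r}$-approximation of $(u,v)$, and stitch these two reductions together with symmetry of information (Lemma~\ref{lem:unichain}).
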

The primary function of this lemma is to give a lower bound on the complexity of any line intersecting $(x, ax+b)$, i.e., ensuring condition (ii) of the previous lemma.

Finally, we also need the following oracle construction of Lutz and Stull. The purpose of this lemma is to show that we can lower the complexity of our line $(a,b)$, thus ensuring item (i) of Lemma \ref{lem:point}. Crucially, we can lower this complexity using only the information contained in $(a,b)$.
\begin{lem}[\cite{LutStu17}]\label{lem:oracles}
				Let $r\in\N$, $z\in\R^2$, and $\eta\in\Q\cap[0,\dim(z)]$.
				Then there is an oracle $D=D(r,z,\eta)$ satisfying
				\begin{itemize}
					\item[\textup{(i)}] For every $t\leq r$, $K^D_t(z)=\min\{\eta r,K_t(z)\}+O(\log r)$.
					\item[\textup{(ii)}] For every $m,t\in\N$ and $y\in\R^m$,
					$K^{D}_{t,r}(y|z)=K_{t,r}(y|z)+O(\log r)$
					and
					$K_t^{z,D}(y)=K_t^z(y)+O(\log r)$.
				\end{itemize}
			\end{lem}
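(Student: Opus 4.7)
The plan is to construct $D$ as a rank encoding that hides $K_r(z) - \eta r$ bits of information about $z \uhr r$, arranged so that $D$ is itself recoverable from $z$ modulo $O(\log r)$ bits of advice. First I would dispose of the easy case: if $K_r(z) \leq \eta r$, then Lemma~\ref{lem:unichain} yields $K_t(z) \leq \eta r + O(\log r)$ for every $t \leq r$, so $D = \emptyset$ already satisfies both (i) and (ii).

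In the main case $K_r(z) > \eta r$, let $p = z \uhr r$. Consider the enumeration $q_1, q_2, \ldots$ of all dyadic rationals $q \in \Q^2$ (at precision $r$, lying in some fixed bounded ball) satisfying $K(q) \leq K_r(z) + O(\log r)$. By Corollary~\ref{cor:trunc}, $p$ appears in this list at some rank $R$; a counting argument bounds $R$ by $2^{K_r(z) + O(\log r)}$, so $R$ fits in $K_r(z) + O(\log r)$ bits. Split $R = R_0 R_1$ with $|R_0| = \lceil \eta r \rceil$, and define $D$ to be the concatenation of $R_1$ with the bookkeeping tuple $(r, K_r(z), \eta)$. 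A decoder with oracle $D$ and input $w \in \{0,1\}^{\lceil \eta r \rceil}$ forms $R = w R_1$, enumerates to the $R$-th qualifying rational, and outputs it, witnessing $K^D_r(z) \leq \eta r + O(\log r)$. For $t < r$, truncating the decoded rational to precision $t$ gives $K^D_t(z) \leq \eta r + O(\log r)$, which combined with the trivial bound $K^D_t(z) \leq K_t(z)$ yields condition (i).

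For condition (ii), I would argue that $D$ is reconstructible from $z$ itself: given $z$ together with the bookkeeping constants, one reruns the enumeration, locates $p = z \uhr r$ at position $R$, and reads off the suffix $R_1$. Hence any computation using both $D$ and $z$ may be simulated using $z$ alone plus $O(\log r)$ bits of advice, yielding $K^D_{t,r}(y|z) \leq K_{t,r}(y|z) + O(\log r)$ and $K^{z,D}_t(y) \leq K^z_t(y) + O(\log r)$; the reverse inequalities hold trivially since $D$ is an additional oracle.

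The main obstacle is making this reconstruction argument rigorous. The enumeration depends on the non-recursive quantity $K_r(z)$, so one must verify that supplying this value as $O(\log r)$ bits of advice really is enough, and that the decoder always terminates despite the enumeration being only semi-decidable. Termination follows from the fact that $p$ appears at a finite rank, and the complexity arithmetic then closes the argument with the stated $O(\log r)$ error.
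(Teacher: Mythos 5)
Your construction correctly handles the \emph{upper} bound in condition (i) and the reconstructibility argument for (ii), but there is a genuine gap in (i): the lemma asserts a two-sided bound $K^D_t(z)=\min\{\eta r,K_t(z)\}+O(\log r)$, and you have only established $K^D_t(z)\leq\min\{\eta r,K_t(z)\}+O(\log r)$. The lower bound is the hard direction, and it is load-bearing: in Lemmas~\ref{lem:boundUpToH}, \ref{lem:boundAfterH} and \ref{lem:highlowerboundHSh} the proofs split into the two cases $K^D_t(a,b)=\eta r$ and $K^D_t(a,b)=K_t(a,b)$ and use each as a lower bound on $K^D_t(a,b)$; if $K^D_t(a,b)$ could drop below $\min\{\eta r,K_t(a,b)\}$ by more than $O(\log r)$, the case analysis and the subsequent applications of Lemma~\ref{lem:point} would fail.

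Moreover, the naive split $R=R_0R_1$ of the rank of $z\uhr r$ does not obviously deliver this lower bound for $t<r$. A symmetry-of-information count gives only $K^D_t(z)\geq K_t(z)-K(D)-O(\log r)\geq K_t(z)-(K_r(z)-\eta r)-O(\log r)$, which collapses at $t=r$ to the right $\eta r-O(\log r)$ but is useless for intermediate $t$ with $K_t(z)<K_r(z)$, and in particular does not show that $D$ is uninformative about $z\uhr t$ when $K_t(z)\leq\eta r$. The suffix $R_1$ of the global rank is an unstructured string that can a priori correlate with low-precision bits of $z$; nothing in the construction forces the ``nested'' property that the revealed bits describe only the information in $z\uhr r$ beyond precision $t^*$ (where $t^*$ is the precision at which $K_{t^*}(z)\approx\eta r$). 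The construction in \cite{LutStu17} is set up precisely so that $D$ encodes only information about $z\uhr r$ \emph{conditioned on} $z\uhr t^*$, which is what makes the lower bound go through at every $t\leq r$. You would need to replace the flat rank split with a description of $z\uhr r$ relative to a suitable prefix $z\uhr t^*$ and then verify the lower bound at all precisions, not just at $t=r$.
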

		
\subsection{Combining the lemmas}\label{ssec:previousworksub}
We now briefly discuss the strategy of \cite{LutStu17} which combines the above lemmas to give non-trivial bounds on the effective dimension of points on a line. Suppose $(a,b)$ is a line with $\dim(a,b) = d$, and  $x$ is a point with $\dim^{a,b}(x) = s$. We will also make the crucial assumption that $d\leq s$. Roughly, Lutz and Stull showed that, for sufficiently large $r$
\begin{center}
    $K_r(x, ax + b) \geq (s + d) r$.
\end{center}
The strategy is as follows. Note that to simplify the exposition, all inequalities in this discussion will be approximate. Using Lemma \ref{lem:oracles}, we find an oracle $D$ which reduces the complexity of $(a,b)$ to some $\eta \leq dr$, i.e., $K^D_r(a,b) = \eta r$. Combining this with Lemma \ref{lem:lines}, we get a lower bound on every line $(u,v)$ intersecting $(x, ax+b)$. That is, we show for any such line,
\[K^D_{r}(u,v)\geq \eta t + s(r - t)-O(\log r)\]
By our choice of $\eta$, we can simplify this inequality to get 
\[K^D_{r}(u,v)\geq sr -O(\log r)\]
In particular, relative to $D$, both conditions of Lemma \ref{lem:point} are satisfied and we have the sufficient lower bound.

In the previous sketch, it was crucial that the dimension of $(a,b)$ was less than $s$, in order for the lower bound from Lemma \ref{lem:lines} to be useful. In the case where $\dim(a,b)$ is much larger than $\dim(x)$, this strategy breaks down, and further techniques are required. 

We also note that this seems to be a very deep issue. As discussed in the Introduction, the point-to-set principle of J. Lutz and N. Lutz \cite{LutLut17} allows us to translate problems from (classical) geometric measure theory into problems of effective dimension. The same issue discussed in this section occurs when attacking the notorious Kakeya and Furstenberg set conjectures using the point-to-set principle. While resolving this obstacle in full generality is still elusive, we are able to get around it in the context of the Dimension Spectrum Conjecture.

\section{Low-Dimensional Lines}\label{sec:lowdim}
In this section, we prove the spectrum conjecture for lines with $\dim(a,b) \leq 1$.
\begin{thm}\label{thm:maintheoremLow}
Let $(a,b) \in \R^2$ be a slope-intercept pair with $\dim(a,b)\leq 1$.. Then for every $s \in [0, 1]$, there is a point $x\in \R$ such that
\begin{center}
$\dim(x, ax + b) = s + \dim(a,b)$.
\end{center}
\end{thm}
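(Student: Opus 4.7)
My plan is to construct the witness $x$ explicitly and then apply the Lutz--Stull machinery (Lemmas \ref{lem:point}, \ref{lem:lines}, \ref{lem:oracles}) in a way that sidesteps the $\dim(x) < \dim(a,b)$ obstacle identified in Subsection \ref{ssec:previousworksub}. Let $d = \dim(a,b)$. When $d \leq s \leq 1$ the strategy reviewed in Subsection \ref{ssec:previousworksub} already produces some $x$ with $\dim(x, ax+b) = d+s$, so the crux is $0 < s < d \leq 1$, which I assume from here on.

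\textbf{Witnessing precisions and construction of $x$.} Since $\dim(a,b) = d$, fix an increasing sequence $(r_j)$ with $K_{r_j}(a,b) \leq (d + \gamma_j) r_j$ and $\gamma_j \downarrow 0$. I assemble the binary expansion of $x$ blockwise along this sequence: in the block of positions $(r_{j-1}, r_j]$ one portion is reserved for an encoding of the initial segment $a\uhr{m_j}$ at a precision $m_j = \Theta(r_j)$ calibrated to $d$ and $s$, and the remainder is filled with bits of a fixed real that is algorithmically random relative to $(a,b)$. The key designed properties are (P1)~$K^{a,b}_{r_j}(x) \leq (s+o(1))\,r_j$, and (P2)~$a\uhr{m_j}$ is recoverable from $x\uhr{r_j}$ up to logarithmic overhead. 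Property (P2) is the ``encoding'' idea alluded to in the introduction and is what is absent from the pure \cite{LutStu17} setup.

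\textbf{Upper bound.} From (P1), the bound $K_{r_j}(a,b) \leq (d + o(1))\,r_j$, and the symmetry of information (Lemma \ref{lem:unichain}),
\[K_{r_j}(x, ax+b) \;\leq\; K_{r_j}(x, a, b) + O(\log r_j) \;\leq\; (d + s + o(1))\, r_j,\]
so $\liminf_r K_r(x, ax+b)/r \leq d + s$.

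\textbf{Lower bound.} For each sufficiently large $r$ I invoke Lemma \ref{lem:oracles} with a suitable $\eta$ to obtain an oracle $D$ realizing condition~(i) of Lemma \ref{lem:point}. The real work is condition~(ii): for every $(u,v) \in B_{2^{-m}}(a,b)$ on the line through $(x, ax+b)$, with $t = -\log\|(a,b) - (u,v)\|$, Lemma \ref{lem:lines} lower-bounds $K^D_r(u,v)$ in terms of $K^D_t(a,b)$ and $K^D_{r-t,r}(x \mid a,b)$. Property~(P2) enters twice: first, taking $m \approx m_j$ restricts the competing lines to those with $t \geq m$, cutting off the dangerous small-$t$ regime; second, the conditional term $K^D_{m,r}(a,b \mid x, ax+b)$ appearing in the conclusion of Lemma \ref{lem:point} is then $o(r)$, because $a\uhr m$ can be decoded from $x\uhr r$. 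Plugging these in and letting $\varepsilon \to 0$ yields $K_r(x, ax+b) \geq (d + s - o(1))\,r$ for all sufficiently large $r$, hence $\liminf \geq d + s$.

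\textbf{Main obstacle.} The delicate step is the simultaneous calibration of $m_j$, the random-bit budget in each block, and the Lemma \ref{lem:point} parameters $(\varepsilon, \delta)$. Enlarging $m_j$ shrinks the conditional term in the conclusion of Lemma \ref{lem:point} but consumes the bit budget needed to realize $K^{a,b}_{r_j}(x) = s r_j + o(r_j)$; shrinking $m_j$ weakens condition~(ii) for small $t$. A consistent choice $m_j = \Theta(r_j)$ must satisfy both constraints, with the margin shrinking as $s \to d$. Moreover, since $x$ is a single real rather than a sequence of finite strings, the lower-bound inequality must hold at \emph{every} sufficiently large precision $r$, not only along the subsequence $r_j$; keeping the blockwise construction well-behaved at intermediate precisions is a further bookkeeping burden, and is where I expect most of the technical effort to lie.
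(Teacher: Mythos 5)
Your proposal follows essentially the same route as the paper: encode a prefix of $a$ into the high-order bits of each block of $x$ so that $(a,b)$ is partially recoverable from $(x,ax+b)$, then run the Lutz--Stull machinery (Lemmas \ref{lem:oracles}, \ref{lem:lines}, \ref{lem:point}) with the restricted search radius $m$ that the decoding provides. The ``bookkeeping burden'' you correctly anticipate is in fact where most of the paper's proof lives — it handles the two regimes $r\in(sh_j,h_j]$ and $r\in(h_j,sh_{j+1}]$ separately with different choices of $\eta$ — but the overall strategy, the role of property (P2), and the parameter-calibration obstacle you flag all match the paper's treatment.
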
	

We begin by giving an intuitive overview of the proof.
\subsection{Overview of the proof}
As mentioned in Section \ref{ssec:previousworksub}, the main obstacle of the Dimension Spectrum Conjecture occurs when the dimension of $x$ is lower than the dimension of the line $(a,b)$. As mentioned in Section \ref{ssec:previousworksub}, in general, this issue is still formidable. However, in the Dimension Spectrum Conjecture, we are given the freedom to specifically construct the point $x$, allowing us overcome this obstacle.

The most natural way to construct a sequence $x$ with $\dim^{a,b}(x) = s$ is to start with a random sequence, and pad it with long strings of zeros. This simple construction, unfortunately, does not seem to work. 

We are able to overcome the obstacle by padding the random sequence \textit{with the bits of $a$}, instead of with zeros. Thus, given an approximation $(x, ax+b)$ we trivially have a decent approximation of $a$ (formalized iin Lemma \ref{lem:computeAB}). This allows us, using Lemma \ref{lem:point}, to restrict our search for $(a,b)$ to a smaller set of candidate lines.
\subsection{Proof for low-dimensional lines}

Fix a slope-intercept pair $(a,b)$, and let $d = \dim(a,b)$. Let $s \in (0,d)$. Let $y \in \R$ be random relative to $(a, b)$. Thus, for every $r \in \N$,
\begin{center}
$K^{a,b}_r(y) \geq r - O(\log r)$.
\end{center}
Define the sequence of natural numbers $\{h_j\}_{j \in \N}$ inductively as follows. Define $h_0 = 1$. For every $j > 0$, let 
		\begin{equation*}
		h_j = \min\left\{h \geq 2^{h_{j-1}}: K_h(a, b) \leq \left(d + \frac{1}{j}\right)h\right\}.
		\end{equation*}
Note that $h_j$ always exists. For every $r \in \N$, let
\begin{align*}
x[r] = \begin{cases}
a[ r - \lfloor sh_j\rfloor] &\text{ if } r \in (\lfloor sh_j\rfloor, h_j] \text{ for some } j \in \N \\
y[r] &\text{ otherwise}
\end{cases}
\end{align*}
where $x[r]$ is the $r$th bit of $x$. Define $x \in \R$ to be the real number with this binary expansion.

One of the most important aspects of our construction is that we encode (a subset of) the information of $a$ into our point $x$. This is formalized in the following lemma.
\begin{lem}\label{lem:computeAB}
For every $j\in\N$, and every $r$ such that $sh_j < r \leq h_j$,
\begin{center}
$K_{r- sh_j, r}(a,b \, | \, x, ax + b) \leq O(\log h_j)$.
\end{center}
\end{lem}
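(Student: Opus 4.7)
The plan is to exhibit, for every $q = (q_1, q_2) \in B_{2^{-r}}(x, ax + b) \cap \Q^2$, an algorithm of description length $O(\log h_j)$ (uniform in $q$) that outputs a rational $p \in B_{2^{-(r - \lfloor sh_j \rfloor)}}(a, b) \cap \Q^2$. By the definition of $K_{r - sh_j, r}(a,b \mid x, ax+b)$, this is exactly what is demanded. The driving structural observation is that, by the construction of $x$, for every $i \in (\lfloor sh_j\rfloor, h_j]$ we have $x[i] = a[i - \lfloor sh_j\rfloor]$. Setting $x_0 := x\uhr\lfloor sh_j\rfloor$, this yields the decomposition
\begin{equation*}
x \;=\; x_0 \;+\; 2^{-\lfloor sh_j\rfloor}\bigl(a\uhr(h_j - \lfloor sh_j\rfloor)\bigr) \;+\; x_{\mathrm{tail}}, \qquad |x_{\mathrm{tail}}| < 2^{-h_j} \leq 2^{-r},
\end{equation*}
so the ``middle'' of $x$ numerically encodes an initial segment of $a$ shifted right by $\lfloor sh_j\rfloor$ bits.

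The decoder first recovers $a$ by setting $\tilde a := 2^{\lfloor sh_j\rfloor}(q_1 - q_1\uhr\lfloor sh_j\rfloor)$. Combining $|q_1 - x| < 2^{-r}$ with the decomposition above gives $|\tilde a - a| = O(2^{-(r - \lfloor sh_j\rfloor)})$, possibly after a $\pm 1$ correction if the truncation $q_1\uhr\lfloor sh_j\rfloor$ disagrees with $x_0$ by one unit due to a boundary carry. It then recovers $b$ by setting $\tilde b := q_2 - \tilde a\cdot q_1$; the identity
\begin{equation*}
\tilde b - b \;=\; (q_2 - (ax + b)) \;+\; a(x - q_1) \;+\; (a - \tilde a)\, q_1
\end{equation*}
and the triangle inequality give $|\tilde b - b| = O(2^{-(r - \lfloor sh_j\rfloor)})$ as well. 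Thus $p = (\tilde a, \tilde b)$ lies within a constant multiple of $2^{-(r - \lfloor sh_j\rfloor)}$ of $(a, b)$, and a constant adjustment to the target precision (absorbed into the $O(\log h_j)$ overhead) places $p$ inside $B_{2^{-(r - \lfloor sh_j\rfloor)}}(a, b)$. The program needs only the integers $\lfloor sh_j\rfloor$ and $r$ (each at most $h_j$, so describable in $O(\log h_j)$ bits), a fixed arithmetic subroutine, and the carry-correction rule; hence $K(p \mid q) \leq O(\log h_j)$ uniformly in $q$, which is exactly the claim.

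The only technically delicate step is the extraction of $\tilde a$: because $q_1$ is only numerically close to $x$ (not bitwise equal), the naive idea of simply reading the bits of $q_1$ in positions $\lfloor sh_j\rfloor + 1, \ldots, r$ can fail when a carry propagates across the truncation boundary at position $\lfloor sh_j\rfloor$. Formulating the extraction numerically, via $q_1 - q_1\uhr\lfloor sh_j\rfloor$, reduces the issue to a single $\pm 1$ ambiguity, which is resolved by a constant-size consistency check against a crude a priori bound on $a$. Once this step is handled, the rest of the argument is a routine triangle-inequality error analysis together with a description-length count.
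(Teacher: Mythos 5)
Your proposal is correct and rests on exactly the same structural observation as the paper's proof: the bits of $x$ in positions $(\lfloor sh_j\rfloor, r]$ coincide with the initial segment of $a$, so an approximation of $(x,ax+b)$ at precision $r$ already determines $a$ to precision $r-\lfloor sh_j\rfloor$, after which $b$ is recovered arithmetically from $b = (ax+b)-ax$. The only difference is packaging: the paper bounds $K_{r-sh_j,r}(a\mid x,ax+b)$ and $K_{r-sh_j}(b\mid a,x,ax+b)$ separately and glues them with the symmetry-of-information lemma, whereas you build a single explicit decoder and bound its description length directly — a stylistic rather than substantive variation, and your careful treatment of the carry at the truncation boundary is a nice touch that the paper leaves implicit.
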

\begin{proof}
By definition, the last $r-sh_j$ bits of $x$ are equal to the first $r-sh_j$ bits of $a$. That is,
\begin{align*}
x[sh_j] \, x[sh_j + 1]\, \ldots x[r] &= a[0] \, a[1]\, \ldots a[r - sh_j]\\
&= a \uhr (r-sh_j).
\end{align*}
Therefore, since additional information cannot increase Kolmogorov complexity,
\begin{align*}
K_{r- sh_j, r}(a \, | \, x, ax + b) &\leq K_{r- sh_j, r}(a \, | \, x)\\
&\leq O(\log h_j).
\end{align*}
Note that, given $2^{-(r-sh_j)}$-approximations of  $a$, $x$, and $ax + b$, it is possible to compute an approximation of $b$. That is,
\begin{center}
$K_{r-sh_j}(b \, | \, a, x, ax+b) \leq O(\log h_j)$.
\end{center}
Therefore, by Lemma \ref{lem:unichain} and the two above inequalities,
\begin{align*}
K_{r- sh_j, r}(a,b \, | \, x, ax + b) &= K_{r- sh_j, r}(a \, | \, x, ax+ b)  \\
&\;\;\;\;\;\;\;\;\;\;\;\;+ K_{r- sh_j, r}(b \, | \, a, x, ax+ b)  + O(\log r) \tag*{}\\
&\leq O(\log h_j) + K_{r- sh_j, r}(b \, | \, a, x, ax+ b) + O(\log r)\\
&\leq O(\log h_j).
\end{align*}
\end{proof}

The other important property of our construction is that $(a,b)$ gives no information about $x$, beyond the information specifically encoded into $x$. 
\begin{lem}\label{lem:boundOnXAB}
For every $j\in\N$, the following hold.
\begin{enumerate}
\item $K^{a,b}_t(x) \geq t - O(\log h_j)$, for all $t \leq sh_j$.
\item $K^{a,b}_r(x) \geq sh_j + r - h_j - O(\log h_j)$, for all $h_j \leq r \leq sh_{j+1}$.
\end{enumerate}
\end{lem}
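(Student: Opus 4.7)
The plan is to exploit the natural decomposition of $x$'s binary expansion into ``$a$-bits'' in positions $Q := \bigcup_i (\lfloor sh_i\rfloor, h_i]$ and ``$y$-bits'' in the complementary set $P$. Given oracle $(a, b)$, the sequence $\{h_i\}$ is determined (computable from $(a,b)$ together with a halting oracle, or at worst describable in $O(\log h_j)$ bits once $j$ and $h_j$ are specified), so from a $2^{-r}$-approximation of $x$ one can algorithmically extract the substring $y|_P$ consisting of the bits of $x\uhr r$ at positions in $P$. This reduces the task to lower-bounding $K^{a,b}(y|_P)$.

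The size of $P$ follows from a combinatorial count that exploits the super-exponential growth $h_j \geq 2^{h_{j-1}}$, which makes the lower levels negligible. For Part 1, where $t \leq sh_j$, only the blocks indexed by $i < j$ contribute, and $\sum_{i<j} h_i \leq 2 h_{j-1} = O(\log h_j)$, so $|P \cap [0, t]| \geq t - O(\log h_j)$. For Part 2, where $h_j \leq r \leq sh_{j+1}$, the level-$j$ block lies entirely inside $[0, r]$ and contributes a main term of $(1-s) h_j$, while the lower-level blocks again contribute only $O(\log h_j)$, giving $|P \cap [0, r]| = sh_j + r - h_j - O(\log h_j)$.

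For the randomness step, I would argue by contradiction: if $K^{a,b}(y|_P) < |P| - C\log h_j$ for large $C$, then combining $y|_P$ with the remaining $|Q|$ bits $y|_{Q}$ written out verbatim and an $O(\log h_j)$-bit description of $P$ (which is pinned down by $j$, $h_j$, and $r$ given the oracle $(a, b)$) produces a description of $y \uhr r$ that contradicts the randomness of $y$. Transferring the resulting lower bound $K^{a,b}(x \uhr r) \geq |P| - O(\log h_j)$ to $K^{a,b}_r(x)$ via Corollary \ref{cor:trunc} then yields the two claimed bounds.

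The main obstacle will be controlling the error terms in Part 2, since $r$ can be exponentially larger than $h_j$ and so a naive use of ``$K^{a,b}_r(y) \geq r - O(\log r)$'' contributes $O(\log r) = \Omega(h_j)$ to the error, swamping the desired $O(\log h_j)$ slack. Overcoming this likely requires using the full Martin-L\"of randomness of $y$ relative to $(a,b)$, which yields the sharper $K^{a,b}(y \uhr r) \geq r - O(1)$ on initial segments, together with a careful bound on $K(r)$ in terms of $h_j$ (rather than $r$) when translating between $K^{a,b}_r$ and $K^{a,b}$ on initial segments via Corollary \ref{cor:trunc}.
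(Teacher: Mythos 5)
Your approach is the paper's. For Part~1 the paper writes $K^{a,b}_t(x) \geq K^{a,b}_t(y) - h_{j-1} - O(\log t)$ --- only the $\leq 2h_{j-1} \leq O(\log h_j)$ overwritten positions plus $O(\log t)$ of bookkeeping separate $x\uhr t$ from $y\uhr t$ given the oracle --- and then invokes the randomness of $y$, which is exactly your counting argument; for Part~2 the paper splits $K^{a,b}_r(x)$ at precision $h_j$ via symmetry of information (Lemma~\ref{lem:unichain}(ii)) rather than by counting $|P\cap[0,r]|$ directly, but the two routes are the same idea dressed differently. Your worry about the error term in Part~2 is well founded, and the paper itself does not escape it: its chain in fact concludes $K^{a,b}_r(x) \geq sh_j + r - h_j - O(\log r)$, and the $K(r)=\Theta(\log r)$ cost hidden in Corollary~\ref{cor:trunc} and Lemma~\ref{lem:unichain} cannot be improved to $O(\log h_j)$ when $r$ is near $sh_{j+1} \geq s\,2^{h_j}$ (there $\log r$ is on the order of $h_j$, not $\log h_j$), so your proposed remedy does not rescue the stated $O(\log h_j)$; however, the lemma is only ever invoked with an $O(\log r)$ slack (see the citation of Part~2 inside Lemma~\ref{lem:boundAfterH}, where it is absorbed into $\gamma r$), so the version you would actually prove, with $O(\log r)$ in Part~2, is the version that is needed.
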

\begin{proof}
We first prove item (1). Let $t \leq sh_j$. Then, by our construction of $x$, and choice of $y$,
\begin{align*}
K^{a,b}_t(x) &\geq K^{a,b}_t(y) - h_{j-1} - O(\log t)\\
&\geq t - O(\log t) - \log h_j - O(\log t)\\
&\geq t - O(\log h_j).
\end{align*}

For item (2), let $h_j \leq r \leq sh_{j+1}$. Then, by item (1), Lemma \ref{lem:unichain} and our construction of $x$, 
\begin{align*}
K^{a,b}_r(x) &= K^{a,b}_{h_j}(x) + K^{a,b}_{r,h_j}(x) - O(\log r) \tag*{[Lemma \ref{lem:unichain}]}\\
&\geq sh_j + K^{a,b}_{r,h_j}(x) - O(\log r) \tag*{[Item (1)]}\\
&\geq sh_j + K^{a,b}_{r,h_j}(y) - O(\log r)\\
&\geq sh_j + r-h_j - O(\log r),
\end{align*}
and the proof is complete. 

\end{proof}

We now prove bounds on the complexity of our constructed point. We break the proof into two parts. 

In the first, we give lower bounds on $K_r(x, ax+b)$ at precisions $sh_j < r \leq h_j$. Intuitively, the proof proceeds as follows. Since $r > sh_j$, given $(x,ax+b)$ to precision $r$ immediately gives a $2^{-r +sh_j}$ approximation of $(a,b)$. Thus, we only have to search for candidate lines $(u,v)$ which satisfy $t = \|(a,b) - (u,v)\| < 2^{-r + sh_j}$. Then, because of the lower bound on $t$, the complexity $K_{r -t}(x)$ is maximal. In other words, we are essentially in the case that the complexity of $x$ is high. Thus, we are able to use the method described in Section \ref{ssec:previousworksub}. We now formalize this intuition.
\begin{lem}\label{lem:boundUpToH}
For every $\gamma > 0$ and all sufficiently large $j\in \N$,
\begin{center}
$K_r(x, ax + b) \geq (s + d)r - \gamma r$,
\end{center}
for every $r \in (sh_j, h_j]$.
\end{lem}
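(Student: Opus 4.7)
The plan is to apply Lemma \ref{lem:point} at precision $r$ with $m = r - \lfloor sh_j \rfloor$, using Lemma \ref{lem:computeAB} to absorb the conditional-complexity term in its conclusion. Fix $\gamma > 0$ and a small rational $\varepsilon > 0$. Applying Lemma \ref{lem:oracles} to the point $(a, b)$ at precision $r$ with a rational $\eta \in [d - \varepsilon, d] \cap \Q$ produces an oracle $D$ satisfying $K^D_r(a, b) \leq (d + \varepsilon) r + O(\log r)$, which verifies condition (i) of Lemma \ref{lem:point} (relativized to $D$) with $\eta$-parameter equal to $d$.

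For condition (ii), take $(u, v) \in B_{2^{-m}}(a, b)$ with $u x + v = a x + b$ and set $t = -\log \|(a, b) - (u, v)\| \in (m, r]$, so that $r - t < sh_j$. Lemma \ref{lem:lines} relativized to $D$ gives
\[
K^D_r(u, v) \geq K^D_t(a, b) + K^D_{r-t, r}(x \mid a, b) - O(\log r).
\]
The hypothesis $\dim(a, b) = d$ together with Lemma \ref{lem:oracles}(i) yields $K^D_t(a, b) \geq (d - \varepsilon) t - O(\log r)$ for all sufficiently large $t \leq r$ (the small-$t$ regime being absorbed by the second summand alone). Since $r - t < sh_j$, Lemma \ref{lem:boundOnXAB}(1) gives $K^{a,b}_{r - t}(x) \geq (r - t) - O(\log h_j)$, and through Lemma \ref{lem:oracles}(ii) and the standard inequality $K^D_{r-t, r}(x \mid a, b) \geq K^{D, a, b}_{r-t}(x) - O(\log r)$ this transfers to $K^D_{r-t, r}(x \mid a, b) \geq (r - t) - O(\log h_j)$. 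Summing yields $K^D_r(u, v) \geq (d - \varepsilon) t + (r - t) - O(\log h_j)$, and the identity $(d - \varepsilon) t + (r - t) - [(d - \varepsilon) r + (1 - d)(r - t)] = \varepsilon (r - t) \geq 0$ confirms condition (ii) with $\delta = 1 - d$ (assuming $d < 1$).

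Lemma \ref{lem:point} now delivers
\[
K^D_r(a, b, x) \leq K^D_r(x, ax + b) + K^D_{m, r}(a, b \mid x, ax + b) + \frac{4\varepsilon}{1 - d}\, r + O(\log r),
\]
and the middle term is $O(\log h_j)$ by Lemma \ref{lem:computeAB} (which relativizes trivially to $D$, since oracles only decrease complexity). In the other direction, Lemma \ref{lem:unichain} combined with $K^D_r(a, b) \geq (d - \varepsilon) r - O(\log r)$ and $K^D_r(x \mid a, b) \geq K^{a, b}_{sh_j}(x) - O(\log r) \geq sh_j - O(\log h_j)$ (via Lemma \ref{lem:boundOnXAB}(1), Lemma \ref{lem:oracles}(ii) and monotonicity of $K^O_t$ in $t$) produces $K^D_r(a, b, x) \geq (d - \varepsilon) r + sh_j - O(\log h_j)$. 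Because $r \leq h_j$ forces $sh_j \geq sr$, combining the two inequalities yields
\[
K_r(x, ax + b) \geq (d + s) r - \Bigl(\varepsilon + \tfrac{4\varepsilon}{1 - d}\Bigr) r - O(\log h_j),
\]
and choosing $\varepsilon$ small relative to $\gamma$ and $1 - d$, together with $j$ large enough that $O(\log h_j) < \gamma r / 2$ for every $r > sh_j$, completes the proof.

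The heart of the argument, and its main obstacle, is the verification of condition (ii). The rearrangement that casts $(d - \varepsilon) t + (r - t)$ into the form $(\eta - \varepsilon) r + \delta(r - t)$ demanded by Lemma \ref{lem:point} forces $\delta \leq 1 - d$, and the extra $(r - t)$ summand on which it rests comes precisely from Lemma \ref{lem:boundOnXAB}(1), hence from the bits of $a$ encoded into $x$. This is the mechanism by which the construction of this section sidesteps the obstacle outlined in Section \ref{ssec:previousworksub}.
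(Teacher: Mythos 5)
Your proof follows the same strategy as the paper (apply Lemma~\ref{lem:point} with $m = r - \lfloor sh_j\rfloor$, use Lemma~\ref{lem:computeAB} for the conditional term, Lemma~\ref{lem:oracles} to drop the complexity of $(a,b)$, and Lemmas~\ref{lem:lines} and \ref{lem:boundOnXAB}(1) for condition~(ii)). However, your choice of parameters leaves a genuine gap in the verification of condition~(ii), and it is exactly the gap that the paper's more fastidious choice of constants is designed to close.

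You set the $\eta$-parameter of Lemma~\ref{lem:point} to $d$ itself, and take $\delta = 1-d$. The unified lower bound you derive is $K^D_r(u,v) \geq (d-\varepsilon)t + (r-t) - O(\log h_j)$, and the algebraic identity you display shows that the gap between this and the target $(d-\varepsilon)r + (1-d)(r-t)$ is exactly $\varepsilon(r-t)$ --- \emph{before} subtracting the $O(\log h_j)$ error. When $t$ is close to $r$ (which happens, since $t$ ranges over $(r-\lfloor sh_j\rfloor, r]$ and can equal $r$ up to $O(1)$), the available slack $\varepsilon(r-t)$ is a bounded constant, far smaller than $O(\log h_j)$, so the required inequality fails and Lemma~\ref{lem:point} cannot be invoked. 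Note also that $\delta = 1-d$ degenerates to $0$ when $d = 1$, making the $\frac{4\varepsilon}{\delta}r$ term in Lemma~\ref{lem:point} undefined, yet the statement must cover $\dim(a,b) = 1$.

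The paper avoids both problems by fixing $\eta \in \Q$ with $d - \gamma/4 < \eta < d - \gamma^2$, so $\eta$ is strictly below $d$ by a fixed positive amount, and then taking $\delta = 1-\eta > 1-d + \gamma^2 > 0$. It then splits into the two cases supplied by Lemma~\ref{lem:oracles}(i). In the case $K^D_t(a,b) = \eta r$ (which is the one that dominates when $t$ is near $r$), the bound is $K^D_r(u,v) \geq \eta r + (r-t) - O(\log r)$, whose slack over $(\eta-\varepsilon)r + (1-\eta)(r-t)$ is $\varepsilon r + \eta(r-t)$ --- an entire $\varepsilon r$ is free to absorb the log error regardless of $t$. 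In the case $K^D_t(a,b) = K_t(a,b)$, the gap $d - \eta > \gamma^2$ together with $\varepsilon < \gamma(d-\eta)/16$ supplies the slack. Your unified bound $(d-\varepsilon)t$ collapses these two regimes and is strictly weaker than the paper's $\eta r$ bound when $t$ is close to $r$, which is exactly where your slack vanishes. The repair is straightforward --- pick $\eta < d$ by a fixed margin and set $\delta = 1-\eta$, as the paper does, or equivalently keep your $\eta = d$ but widen the gap between the Lemma~\ref{lem:oracles} threshold and the Lemma~\ref{lem:point} $\varepsilon$-parameter --- but as written, condition~(ii) does not hold.
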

\begin{proof}
Let $\eta \in \Q$ such that 
\begin{center}
$d - \gamma/4 < \eta < d - \gamma^2$.
\end{center}
Let $\ve \in \Q$ such that
\begin{center}
$\ve  < \gamma (d - \eta) / 16$.
\end{center}
Note that
\begin{center}
    $\frac{4\ve}{1 - \eta} \leq \frac{\gamma}{4}$
\end{center}
We also note that, since $\eta$ and $\ve$ are constant,
\begin{center}
    $K(\eta, \ve) = O(1)$.
\end{center}
Let $D = D(r, (a,b), \eta)$ be the oracle of Lemma \ref{lem:oracles} and let $\delta = 1-\eta$.

Let $(u,v)$ be a line such that $t := \|(a,b) - (u,v)\| \geq r - sh_j$, and $ux + v = ax + b$. Note that $r - t \leq sh_j$. Then, by Lemma \ref{lem:lines}, Lemma \ref{lem:oracles}  and Lemma \ref{lem:boundOnXAB}(1), 

\begin{align}
K^D_r(u,v) &\geq K^D_t(a,b) + K^D_{r-t,r}(x \, | \, a,b) - O(\log r)\tag*{[Lemma \ref{lem:lines}]}\\
&\geq K^D_t(a,b) + K_{r-t,r}(x \, | \, a,b) - O(\log r)\tag*{[Lemma \ref{lem:oracles}]}\\
&\geq  K^D_t(a,b) +r - t - O(\log r) \tag*{[Lemma \ref{lem:boundOnXAB}(1)]}.
\end{align}
There are two cases by Lemma \ref{lem:oracles}. For the first, assume that $K^D_t(a,b) = \eta r$. Then
\begin{align}
&\geq \eta r  + r - t - O(\log r) \tag*{[Definition of $\dim$]}\\
&= (\eta -\ve)r + r - t \tag*{[$r$ is large]}\\
&\geq (\eta -\ve)r + (1-\eta)(r - t) \tag*{}
\end{align}
For the second, assume that $K^D_t(a,b) = K_t(a,b)$. Then
\begin{align}
K^D_r(u,v) &\geq K_t(a,b) +r - t - O(\log r) \tag*{}\\
&\geq dt - o(t)  +r - t - O(\log r) \tag*{[Definition of $\dim$]}\\
&= \eta r + (1-\eta)r - t(1 - d) - \ve r \tag*{[$r$ is large]}\\
&\geq \eta r -\ve r + (1-\eta)(r - t) \tag*{[$d > \eta$]}\\
&\geq (\eta - \ve) r + (1 - \eta)(r - t). 
\end{align}

Therefore, in either case, we may apply Lemma \ref{lem:point},
\begin{align}
K_r(x, ax + b) &\geq K_r^D(a,b,x)- K_{r-sh_j,r}(a,b \, | , x, ax+b)\tag*{[Lemma \ref{lem:point}]}\\
&\;\;\;\;\;\;\;\;\;\;\;\;\;\;\;\;\;\;\;\;-\frac{4\ve}{1-\eta}r- K(\eta, \ve) - O(\log r) \tag*{}\\
&\geq K_r^D(a,b,x)- K_{r-sh_j,r}(a,b \, | \, x, ax+b)-\frac{\gamma}{4}r-\frac{\gamma}{8}r \tag*{}\\
&=K_r^D(a,b,x)- K_{r-sh_j,r}(a,b \, | \, x, ax+b)-\frac{3\gamma}{8}r \label{eq:lowerBoundupToH}.
\end{align}
By Lemma \ref{lem:boundOnXAB}(1), our construction of oracle $D$, and the symmetry of information,
\begin{align}
K_r^D(a,b,x) &= K_r^D(a,b) + K_r^D(x \, | \, a,b) - O(\log r) \tag*{[Lemma \ref{lem:unichain}]}\\
&= K_r^D(a,b) + K_r(x \, | \, a,b) - O(\log r)\tag*{[Lemma \ref{lem:oracles}(ii)]} \\
&\geq \eta r + K_{r}(x \, | \, a,b) - O(\log r) \tag*{[Lemma \ref{lem:oracles}(i)]}\\
&\geq \eta r + sh_j - O(\log r) \tag*{[Lemma \ref{lem:boundOnXAB}(1)]}\\
&\geq \eta r + sh_j - \frac{\gamma}{4}r \label{eq:boundABXupToH}.
\end{align}
Finally, by Lemma \ref{lem:computeAB},
\begin{equation}
K_{r-sh_j,r}(a,b \mid x, ax+b) \leq \frac{\gamma}{8}r. \label{eq:boundABGivenXuptoH}
\end{equation}

Together, inequalities (\ref{eq:lowerBoundupToH}), (\ref{eq:boundABXupToH}) and (\ref{eq:boundABGivenXuptoH}) imply that
\begin{align*}
K_r(x, ax + b) &\geq K_r^D(a,b,x)- K_{r-sh_j,r}(a,b \, | , x, ax+b)-\frac{3\gamma}{8}r\\
&\geq \eta r + sh_j - \frac{\gamma}{4}r - \frac{\gamma}{8}r-\frac{3\gamma}{8}r\\
&\geq dr - \frac{\gamma}{4}r + sh_j - \frac{3\gamma}{4}r\\
&\geq dr + sh_j - \gamma r\\
&\geq (s + d)r - \gamma r,
\end{align*}
and the proof is complete.
\end{proof}

We now give lower bounds on the complexity of our point, $K_r(x, ax+b)$, when $h_j < r \leq sh_{j+1}$. Intuitively, the proof proeeds as follows. Using the previous lemma, we can, given a $2^{-h_j}$-approximation of $(x, ax +b)$, compute a $2^{-h_j}$-approximation of $(a,b)$. Thus, we only have to compute the last $r-h_j$ bits of $(a,b)$. Importantly, since $r > h_j$, the last $r - h_j$ bits of $x$ are maximal. Hence, we can simply lower the complexity of the last $r-h_j$ bits of $(a,b)$ to roughly $s(r - h_j)$. Thus, we are again, essentially, in the case where $\dim(x) \geq \dim(a,b)$ and the techniques of Section \ref{ssec:previousworksub} work. We now formalize this intuition.
\begin{lem}\label{lem:boundAfterH}
For every $\gamma > 0$ and all sufficiently large $j\in \N$,
\begin{center}
$K_r(x, ax + b) \geq (s + d)r - \gamma r$,
\end{center}
for every $r \in (h_j, sh_{j+1}]$.
\end{lem}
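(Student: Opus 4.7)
The plan is to follow the structure of the proof of Lemma~\ref{lem:boundUpToH}, applying Lemma~\ref{lem:point} at precision $r$, but with the analysis strengthened to account for the regime $r > h_j$. The crucial new input is an improved bound on the conditional complexity $K_{m, r}(a, b \mid x, ax+b)$, obtained by feeding Lemma~\ref{lem:boundUpToH} itself back into the symmetry of information.

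First, I would derive the bound
\[K_{h_j, r}(a, b \mid x, ax+b) \leq (\gamma/2) h_j + O(\log r).\]
Lemma~\ref{lem:boundUpToH} at precision $h_j$ gives $K_{h_j}(x, ax+b) \geq (s + d - \gamma/3) h_j$, while the definition of $h_j$ together with the construction of $x$ and Lemma~\ref{lem:boundOnXAB} yield the matching upper bound $K_{h_j}(a, b, x) \leq (d + s + 1/j) h_j + O(\log h_j)$; Lemma~\ref{lem:unichain}(i) then gives the desired bound at precision $h_j$, and monotonicity of conditional complexity (more information cannot hurt) extends it to precision $r$. This is the key strengthening of Lemma~\ref{lem:computeAB}, which on its own only controls conditional complexity up to precision $(1-s)h_j$.

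Next, I would pick rationals $\eta$ and $\ve$ with $d - \gamma/8 < \eta < d$ and $4\ve/(1 - \eta) \leq \gamma/8$, set $\delta = 1 - \eta$, and let $D = D(r, (a, b), \eta)$ be the oracle from Lemma~\ref{lem:oracles}. Applying Lemma~\ref{lem:point} with $m = h_j$ requires verifying condition (ii) for every $(u, v) \in B_{2^{-h_j}}(a, b)$ with $ux + v = ax + b$. Using Lemma~\ref{lem:lines} relativized to $D$,
\[K_r^D(u, v) \geq K_t^D(a, b) + K_{r - t, r}(x \mid a, b) - O(\log r),\]
I would split by the range of $r - t$: when $r - t \leq sh_j$ the lower bound $K_{r - t, r}(x \mid a, b) \geq r - t - O(\log r)$ holds from randomness of $y$ relative to $(a, b)$, exactly as in Lemma~\ref{lem:boundUpToH}; when $r - t > h_j$, Lemma~\ref{lem:boundOnXAB}(2) relativized to $(a,b)$ gives $K_{r-t, r}(x \mid a, b) \geq sh_j + (r - t) - h_j - O(\log r)$, again with enough slack.

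The hard part will be the intermediate range $r - t \in (sh_j, h_j]$, where the bits of $x$ encode bits of $a$ and $K_{r - t, r}(x \mid a, b)$ collapses to roughly $sh_j$. Because $t \geq h_j$, this case arises only when $r > (1 + s) h_j$, and the worst sub-case is $t = \max(h_j, r - h_j)$. I would split the range of $r$ accordingly: for $r$ sufficiently close to $h_j$, the desired bound follows from Lemma~\ref{lem:boundUpToH} together with monotonicity of $K_r$, since the target $(s+d)r - \gamma r$ is then close to $(s+d)h_j$; for $r$ sufficiently far from $h_j$, the slack between $\eta$ and $d$ combined with $K_t^D(a, b) \geq \min\{\eta r, (d - o(1)) t\}$ is enough to verify condition (ii) in this range. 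Assembling the estimates via Lemma~\ref{lem:point}, together with $K_r^D(a, b, x) \geq \eta r + sh_j + (r - h_j) - O(\log r)$ (from Lemma~\ref{lem:unichain}, Lemma~\ref{lem:oracles}(ii), and Lemma~\ref{lem:boundOnXAB}(2)) and the conditional bound from the first step, yields $K_r(x, ax + b) \geq (1 + \eta - \gamma/8) r - (1 - s + \gamma/2) h_j - O(\log r) \geq (s + d) r - \gamma r$ for $\eta$ close enough to $d$ and $j$ large.
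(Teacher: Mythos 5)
Your first step (bounding $K_{h_j, r}(a,b \mid x, ax+b)$ by feeding Lemma~\ref{lem:boundUpToH} at precision $h_j$ back through Lemma~\ref{lem:unichain}), the choice $m = h_j$, and the final assembly all match the paper. But the parameter choice for Lemma~\ref{lem:point} is where the proof breaks. You set $\eta \approx d$ and $\delta = 1 - \eta \approx 1 - d$; the paper instead takes $\eta$ to be a rational just below the $r$-dependent weighted average $\alpha = \frac{s(r - h_j) + dh_j}{r}$ and takes $\delta = \alpha - \eta = \Theta(\gamma)$, a small quantity rather than a constant.

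With your choice, condition (ii) of Lemma~\ref{lem:point} genuinely fails in the intermediate range $r - t \in (sh_j, h_j]$ in the sub-case $K^D_t(a,b) = K_t(a,b)$, and your proposed split by $r$ does not close the gap. Concretely, take $r = 2h_j$ and a candidate line at $t = h_j$ (so $r - t = h_j$): Lemmas~\ref{lem:lines} and \ref{lem:boundOnXAB} only give $K^D_r(u,v) \gtrsim d\,t + s(r-t) = (d+s)h_j$, whereas condition (ii) with $\delta = 1-\eta \approx 1-d$ demands $K^D_r(u,v) \gtrsim \eta r + (1-\eta)(r-t) \approx (1+d)h_j$, which is strictly larger whenever $s < 1$. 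Your Case~A (monotonicity from $K_{h_j}(x,ax+b)$) only covers $r \leq (1 + O(\gamma))h_j$, and your Case~B (using the slack $d - \eta$ and $K_t(a,b) \geq (d - o(1))t$) only becomes effective once $\ve r \gtrsim (1-s)h_j$, i.e., $r = \Omega(h_j/\gamma)$; the range in between, which includes $r = 2h_j$, is left uncovered. The fundamental issue is that $d\,t + s(r - t) = dr - (d-s)(r - t)$ falls short of $dr$ by $(d-s)(r-t)$, which can be of order $h_j$, and no admissible $\ve$ absorbs this while keeping $4\ve/\delta$ small. The paper's resolution is the identity $d\,t + s(r-t) = \alpha r + (d-s)(t - h_j) \geq \alpha r$ for $t \geq h_j$, which is why $\eta$ must track $\alpha$ (not $d$) and $\delta$ must be taken of order $\gamma$.
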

\begin{proof}
Recall that we are assuming that $s < d$.  Let $\hat{s} \in \Q\cap (0,s)$ be a dyadic rational such that 
\begin{center}
    $\gamma/8 < s - \hat{s} < \gamma / 4$.
\end{center}
Let $\hat{d} \in \Q \cap (0,\dim(a,b))$ be a dyadic rational such that 
\begin{center}
    $\gamma / 8 < \dim(a,b) - \hat{d} < \gamma / 4$. 
\end{center}
Define 
\begin{center}
$\alpha = \frac{s(r - h_j) + \dim(a,b)h_j}{r}$,
\end{center}
and $\eta \in \Q \cap (0, \alpha)$ by
\begin{center}
$\eta = \frac{\hat{s}(r - h_j) + \hat{d}h_j}{r}$.
\end{center}

Finally, let $\ve = \gamma^2/64$. Note that
\begin{align}
    \alpha - \eta &= \frac{s(r - h_{j}) + dh_{j} - \hat{s}(r - h_{j}) - \hat{d}h_{j}}{r}\tag*{}\\
    &= \frac{(s-\hat{s})(r - h_{j}) + (d - \hat{d})h_{j} }{r}\tag*{}\\
    &\leq \frac{\frac{\gamma}{4}(r - h_{j}) + \frac{\gamma}{4}h_{j} }{r}\tag*{}\\
    &= \frac{\gamma}{4}
\end{align}
Similarly,
\begin{align}
    \alpha - \eta &= \frac{s(r - h_{j}) + \dim(a,b)h_{j} - \hat{s}(r - h_{j}) - \hat{d}h_{j-1}}{r}\tag*{}\\
    &= \frac{(s-\hat{s})(r - h_{j}) + (\dim(a,b) - \hat{d})h_{j} }{r}\tag*{}\\
    &> \frac{\frac{\gamma}{8}(r - h_{j}) + \frac{\gamma}{8}h_{j} }{r}\tag*{}\\
    &= \frac{\gamma}{8}
\end{align}
In particular,
\begin{equation}
    \frac{4\ve}{\alpha - \eta} \leq \gamma / 4.
\end{equation}
We also note that
\begin{equation}
    K(\ve, \eta) \leq K(\gamma, \hat{s}, \hat{d}, r, h_{j}) \leq O(\log r),
\end{equation}
since $j$ was chosen to be sufficiently large and $\gamma$ is constant. 

Finally, let $D = D(r, (a,b), \eta)$ be the oracle of Lemma \ref{lem:oracles}. Note that we chose $D$ so that, roughly, $D$ lowers the complexity of the last $r-h_j$ bits of $(a,b)$ to $s(r - h_j)$. 

Let $(u,v)$ be a line such that $t := \|(a,b) - (u,v)\| \geq h_j$, and $ux + v = ax + b$. Then, by Lemmas \ref{lem:lines}, \ref{lem:oracles} and \ref{lem:boundOnXAB},
\begin{align}
K^D_r(u,v) &\geq K^D_t(a,b) + K^D_{r-t,r}(x \, | \, a,b) - O(\log r) \tag*{[Lemma \ref{lem:lines}]}\\
&\geq K^D_t(a,b) + K_{r-t,r}(x \, | \, a,b) - O(\log r) \tag*{[Lemma \ref{lem:oracles}]}\\
&\geq K^D_t(a,b) + s(r - t) - O(\log r)\tag*{[Lemma \ref{lem:boundOnXAB}(1)]}.
\end{align}
There are two cases. In the first, $K^D_t(a,b) = \eta r$. Then,
\begin{align*}
    K^D_r(u,v) &\geq \eta r + s(r - t) - O(\log r)\\
    &\geq (\eta - \ve) r + s(r-t)\\
    &\geq (\eta - \ve) r + (\alpha - \eta)(r-t).
\end{align*}
In the other case, $K^D_t(a,b) = K_t(a,b)$. Then,
\begin{align}
K^D_r(u,v) &\geq K_t(a,b) + s(r-t) - O(\log r)\tag*{}\\
&\geq dt - o(t) + s(r-t) - O(\log r) \tag*{[Definition of $\dim$]}\\
&= dh_j + d(t - h_j) + s(r - t) -o(r)\tag*{}\\
&= dh_j + d(t - h_j) + s(r - h_j) - s(t-h_j) - o(r)\tag*{}\\
&= \alpha r + (d - s)(t - h_j) - o(r)\tag*{}\\
&= \eta r + (\alpha - \eta)r + (d - s)(t - h_j) - o(r)\tag*{}\\
&\geq \eta r + (\alpha - \eta)(r - t) - o(r)\tag*{}\\
&\geq (\eta - \ve)r + (\alpha - \eta)(r - t)\tag*{}.
\end{align}

Therefore we may apply Lemma \ref{lem:point}, which yields
\begin{align}
K^D_r(a,b,x) &\leq K_r(x, ax + b) + K^D_{h_j,r}(a,b,x \, | \, x, ax + b) \tag*{[Lemma \ref{lem:point}]}\\
&\;\;\;\;\;\;\;\;\;\;\;+\frac{4\ve}{\alpha - \eta}r+K(\ve,\eta)+O(\log r)\tag*{}\\
&\leq K_r(x, ax + b) + K^D_{h_j,r}(a,b,x \, | \, x, ax + b) \tag*{}\\
&\;\;\;\;\;\;\;\;\;\;\;\;+ \frac{\gamma}{4}r+\frac{\gamma}{8}r \tag*{[Choice of $\eta, \ve$]}\\
&= K_r(x, ax + b) + K^D_{h_j,r}(a,b,x \, | \, x, ax + b)+\frac{3\gamma}{8}r \label{eq:boundABXgivenXAXB}.
\end{align}
By Lemma \ref{lem:boundOnXAB}, and our construction of oracle $D$,
\begin{align}
K_r^D(a,b,x) &= K_r^D(a,b) + K_r^D(x \, | \, a,b) - O(\log r) \tag*{[Lemma \ref{lem:unichain}]}\\
&= \eta r + K_r(x \, | \, a,b) - O(\log r) \tag*{[Lemma \ref{lem:oracles}]}\\
&\geq \eta r + sh_j + r - h_j - O(\log r) \tag*{[Lemma \ref{lem:boundOnXAB}(2)]} \\
&\geq \alpha r - \frac{\gamma}{4}r + sh_j + r - h_j - O(\log r) \tag*{}\\
&\geq s(r - h_j) + dh_j - \frac{\gamma}{4}r + sh_j + r - h_j - O(\log r)\tag*{}\\
&\geq (1+s)r - (1 - d)h_j - \frac{\gamma}{4}r \label{eq:boundABX2}.
\end{align}

By Lemmas \ref{lem:boundUpToH}, and \ref{lem:unichain}, and the fact that additional information cannot increase Kolmogorov complexity
\begin{align}
K_{h_j, r}(a, b, x \, | \, x, ax + b) &\leq K_{h_j,h_j}(a,b,x \, | \, x, ax + b) \tag*{}\\
&= K_{h_j}(a,b,x)- K_{h_j}(x, ax + b) \tag*{[Lemma \ref{lem:unichain}]}\\
&= K_{h_j}(a,b) + K_{h_j}(x\mid a,b)\tag*{}\\
&\;\;\;\;\;\;\;\;\;\;\;\;\;\;\;\;- K_{h_j}(x, ax + b) \tag*{[Lemma \ref{lem:unichain}]}\\
&= K_{h_j}(a,b) + sh_j - K_{h_j}(x, ax + b) \tag*{[Lemma \ref{lem:boundOnXAB}]}\\
&\leq K_{h_j}(a,b) +sh_j -(s+d)h_j + \frac{\gamma}{16}h_j \tag*{[Lemma \ref{lem:boundUpToH}]}\\
&\leq dh_j + h_j/j - dh_j + \frac{\gamma}{16}r\tag*{[Definition of $h_j$]}\\
&\leq \frac{\gamma}{8}r \label{eq:boundABXgivenX2}
\end{align}
Combining inequalities (\ref{eq:boundABXgivenXAXB}), (\ref{eq:boundABX2}) and (\ref{eq:boundABXgivenX2}) , we see that
\begin{align*}
K_r(x, ax + b) &\geq K_r^D(a,b,x) -\frac{\gamma}{8}r -\frac{3\gamma}{8}r \\
&\geq (1+s)r - (1-d)h_j -\frac{\gamma}{4}r -  \frac{\gamma}{4}r\\
&\geq (1+s)r - (1-d)h_j - \gamma r.
\end{align*}

Note that, since $d \leq 1$, and $h_j \leq r$, 
\begin{align*}
(1+s)r - h_j(1 - d) - (s + d)r &= r(1-d) - h_j(1-d)\\
&=(r - h_j)(1-d)\\
&\geq 0.
\end{align*}
Thus,  
\begin{align*}
K_r(x, ax+b) &\geq (1+s)r - h_j(1 - d)- \gamma r\\
&\geq (s + d)r - \gamma r,
\end{align*}
and the proof is complete for the case $s < \dim(a,b)$.

\end{proof}

We are now able to prove our main theorem.
\begin{proof}[Proof of Theorem \ref{thm:maintheoremLow}]
Let $(a,b) \in \R^2$ be a slope-intercept pair with 
\begin{center}
$d = \dim(a,b) \leq 1$.
\end{center}
Let $s \in [0, 1]$. If $s = 0$, then 
\begin{align*}
K_r(a, a^2 + b) &= K_r(a) + K_r(a^2 + b \, | \, a) + O(\log r)\\
&= K_r(a) + K_r(b \, | \, a) + O(\log r)\\
&= K_r(a,b) + O(\log r),
\end{align*}
and so the conclusion holds.

If $s = 1$, then by \cite{LutStu17}, for any point $x$ which is random relative to $(a,b)$,
\begin{center}
$\dim(x, ax + b) = 1 + d$,
\end{center}
and the claim follows.

If $s \geq d$, then Lutz and Stull \cite{LutStu17} showed that for any $x$ such that 
\begin{center}
    $\dim^{a,b}(x) = \dim(x) = s$,
\end{center}
we have $\dim(x, ax + b) = s + d$. 

Therefore, we may assume that $s \in (0, 1)$ and $s < d$. Let $x$ be the point constructed in this section. Let $\gamma > 0$. Let $j$ be large enough so that the conclusions of Lemmas \ref{lem:boundUpToH} and \ref{lem:boundAfterH} hold for these choices of $(a,b)$, $x$, $s$ and $\gamma$. Then, by Lemmas \ref{lem:boundUpToH} and \ref{lem:boundAfterH},
\begin{align*}
\dim(x, ax + b) &= \liminf_{r\rightarrow\infty} \frac{K_r(x, ax+ b)}{r}\\
&\geq \liminf_{r\rightarrow\infty} \frac{(s + d)r - \gamma r}{r}\\
&= s+d - \gamma.
\end{align*}
Since we chose $\gamma$ arbitrarily, we see that 
\begin{center}
$\dim(x,ax+b) \geq s + d$.
\end{center}
For the upper bound, let $j\in\N$ be sufficiently large. Then
\begin{align*}
K_{h_j}(x, ax + b) &\leq K_{h_j}(x, a, b)\\
&= K_{h_j}(a,b) + K_{h_j}(x \, | \, a,b)\\
&\leq dh_j + sh_j\\
&= (d+s)h_j.
\end{align*}
Therefore,
\begin{center}
$\dim(x, ax+b) \leq s + d$,
\end{center}
and the proof is complete.
\end{proof}

\section{High-Dimensional Lines}\label{sec:highdim}
In this section we show that the Dimension Spectrum Conjecture holds for lines of high dimension, i.e., for lines $L_{a,b}$ such that $\dim(a, b) > 1$. That is, we will prove the following theorem.
\begin{thm}\label{thm:maintheoremHigh}
Let $(a,b) \in \R^2$ be a slope-intercept pair with $\dim(a,b) > 1$.. Then for every $s \in [0, 1]$, there is a point $x\in \R$ such that
\begin{center}
$\dim(x, ax + b) = 1 + s$.
\end{center}
\end{thm}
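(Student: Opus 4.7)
The plan is to reduce the high-dimensional case to the low-dimensional analysis of Section \ref{sec:lowdim} via the oracle of Lemma \ref{lem:oracles}, inherit the lower bound from that analysis, and confront the new upper-bound difficulty by a non-constructive argument. The endpoints $s = 0$ and $s = 1$ of the target interval are already in $\spec(L_{a,b})$ -- by Turetsky's theorem and by the Lutz--Stull ``$1 + d$'' result respectively (with $d = 1$ whenever $\dim(a,b) > 1$) -- so I restrict attention to $s \in (0, 1)$.

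For the setup, I would apply Lemma \ref{lem:oracles} with $\eta = 1$ to obtain an oracle $D$ with $K^D_r(a, b) = r + O(\log r)$; relative to $D$, the pair $(a,b)$ behaves like a line of effective dimension exactly $1$. I then repeat the Section \ref{sec:lowdim} construction relative to $D$: choose a sequence $\{h_j\}$ with $K^D_{h_j}(a,b) \leq (1 + 1/j)h_j$, take $y$ Martin-L\"of random relative to both $(a,b)$ and $D$, and build $x$ by padding $y$ with bits of $a$ on each window $(sh_j, h_j]$. Replaying Lemmas \ref{lem:boundUpToH} and \ref{lem:boundAfterH} with $K^D_r$ in place of $K_r$ yields $K^D_r(x, ax+b) \geq (1 + s)r - o(r)$ on the target subsequences, and since $K_r \geq K^D_r$ this delivers the lower bound $\dim(x, ax+b) \geq 1 + s$.

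The genuine difficulty is the matching upper bound. The decomposition that closed Section \ref{sec:lowdim},
\[
K_{h_j}(x, ax+b) \leq K_{h_j}(a,b) + K_{h_j}(x \mid a, b) + O(\log h_j) \leq (d + s)h_j + O(\log h_j),
\]
is too weak once $d > 1$. The natural substitute $K_{h_j}(x, ax+b) \leq K_{h_j}(x) + h_j + O(\log h_j)$ -- valid because $ax+b$ is a single real at precision $h_j$ -- would suffice provided $K_{h_j}(x) \leq sh_j + o(h_j)$, but the padded construction can inflate $K_{h_j}(x)$ all the way to $h_j$ when $\dim(a)$ is close to $1$. My plan for resolving this tension is non-constructive: rather than fix $x$ outright, consider the family of candidates obtained by (i) varying $y$ over reals Martin-L\"of random relative to $(a,b)$ and $D$, and (ii) replacing the literal bits of $a$ on each window $(sh_j, h_j]$ by the bits of some rational $a'_j$ with $|a'_j - a| \leq 2^{-(1-s)h_j}$. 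The lower-bound analysis is robust under (ii) because Lemma \ref{lem:computeAB} only requires the padding to approximate $a$ at scale $h_j - sh_j$, so $K^D_r(x, ax+b) \geq (1+s)r - o(r)$ persists uniformly across the family. An existence argument inside the family -- balancing the $\approx 2^{sh_j}$ admissible paddings at each scale against the complexity budget afforded by the oracle $D$ and symmetry of information -- should then single out coherent scale-by-scale choices $\{a'_j\}$ and a companion $y$ whose $x$ satisfies $K_{h_j}(x) \leq sh_j + o(h_j)$ while retaining the lower bound. Making this existence argument execute coherently across all $h_j$ simultaneously, so that a single real $x$ witnesses both bounds on the same subsequence, is what I expect to be the main technical difficulty.
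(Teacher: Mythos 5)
There is a genuine gap. Your lower-bound plan relies on a single oracle $D$ with $K^D_r(a,b) = r + O(\log r)$ for \emph{all} $r$, against which you then rebuild the Section~\ref{sec:lowdim} construction. But Lemma~\ref{lem:oracles} produces an oracle $D = D(r, z, \eta)$ tied to a fixed precision $r$: it only controls $K^D_t(z)$ for $t \leq r$, and the paper re-instantiates a fresh $D$ inside each lower-bound estimate at each precision. A global dimension-lowering oracle against which you could define $\{h_j\}$, choose $y$, and run the whole construction is a different object, and nothing in the paper supplies it. More importantly, even granting such a $D$, the upper bound does not follow from the relativized construction, since $K_{h_j} \geq K^D_{h_j}$.

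Your proposed fix for the upper bound also does not go through. Requiring $|a'_j - a| \leq 2^{-(1-s)h_j}$ forces $a'_j$ to agree with $a$ on the first $(1-s)h_j$ bits, which are exactly the bits you pad into the window $(sh_j, h_j]$; so there is no real freedom in $a'_j$, and the complexity of the padding is pinned to $K_{(1-s)h_j}(a)$, which can be $\Omega(h_j)$. Varying $y$ does not help either, since $y$ must stay random relative to $(a,b)$ and thus contributes $\approx sh_j$ bits regardless. The paper sidesteps $K_{h_j}(x)$ entirely and targets $K_r(x, ax+b)$ directly via a one-parameter family: it takes $\mathbf{x_n}$ to be $m-n$ zeros followed by the first $n$ bits of $1/a$ (so the parameter is \emph{how much} of $1/a$ to encode, not which nearby rational to use), and the crucial structural observation is that the resulting reals satisfy $x_n = x_0 + 2^{-h_j+n}/a$, so $(x_n, ax_n + b)$ are translates of one another and $n \mapsto K_r(x_n, ax_n+b)$ changes by at most $1 + O(\log r)$ per step (Lemma~\ref{lem:propertyofxn}). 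Since $x_0$ (all zeros) gives $K_{h_j}(x_0, ax_0+b) \leq (1+s)h_j + O(\log h_j)$ while $x_m$ (full prefix of $1/a$) gives $K_r(x_m, ax_m+b) \geq (1+s)r - r/j$ for all $r$ in the window, a discrete intermediate-value argument yields the required $\mathbf{z}$ (Lemma~\ref{lem:goodpointz}). This translate structure and the resulting continuity are the missing ideas in your proposal; without them, ``balancing admissible paddings against a complexity budget'' is not a proof.
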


\subsection{Overview of proof}
In this case, we again apply essential insight of the proof for low-dimensional lines, namely, encoding (a subset of) the information of $a$ into $x$. However, when $\dim(a,b) > 1$ constructing $x$ as before potentially causes a problem. Specifically, in this case, the previous construction might cause $\dim(x, ax +b)$ to become too large.

The overcome this, we rely on a non-constructive argument. More specifically, we begin as in the construction of $x$ in the low-dimensional case. However at stage $j$ of our construction, we do not add all $h_j - sh_j$ bits of $a$ to $x$. Instead we consider the $m = h_j - sh_j$ strings $\mathbf{x_0},\ldots, \mathbf{x_m}$, where
\begin{align*}
\mathbf{x_n}[i] = \begin{cases}
0 &\text{ if } 0 \leq i < m - n\\
\frac{1}{a}[i - (m-n)] &\text{ if } m-n\leq i\leq m \tag*{(*)}
\end{cases}
\end{align*}
and look at the extension of $x$ with the bits of $\mathbf{x_n}$.

Using a discrete, approximate, version of the mean value theorem, we are able to conclude that there is some extension $x^\prime = x\mathbf{x_n}$ such that
\begin{center}
    $\min\limits_{sh_j\leq r\leq h_j}\vert K_r(x^\prime, ax^\prime + b) - (1+s)r\vert $
\end{center}
is sufficiently small. We then carry on with the argument of the low-dimensional lines until $sh_{j+1}$.

\subsection{Proof for high-dimensional lines}
In order to prove Theorem \ref{thm:maintheoremHigh}, we will, given any slope-intercept pair $(a,b)$ and $s \in (0,1)$, construct a point $x \in [0, 1]$ such that $\dim(x, ax + b) = 1 + s$. 

Our construction is best phrased as constructing an infinite binary sequence $\mathbf{x}$, and then taking $x$ to be the unique real number whose binary expansion is $\mathbf{x}$. We now recall terminology needed in the construction. We will use bold variables to denote binary strings and (infinite) binary sequences. If $\mathbf{x}$ is a (finite) binary string and $\mathbf{y}$ is a binary string or sequence, we write $\mathbf{x} \prec \mathbf{y}$ if $\mathbf{x}$ is a prefix of $\mathbf{y}$.

Let $(a,b)$ be a slope intercept pair and let $d = \dim(a,b)$. Define the sequence of natural numbers $\{h_j\}_{j \in \N}$ inductively as follows. Define $h_0 = 2$. For every $j > 0$, let 
		\begin{equation*}
		h_j = \min\left\{h \geq 2^{h_{j-1}}: K_h(a, b) \leq \left(d + 2^{-j}\right)h\right\}.
		\end{equation*}
We define our sequence $\mathbf{x}$ inductively. Let $\mathbf{y}$ be a random, relative to $(a, b)$, binary sequence. That is, there is some constant $c$ such that
\begin{equation}\label{eq:lowerBoundOnY}
    K^{a,b}(y \uhr r) \geq r - c,
\end{equation}
for every $r \in \N$. We begin our inductive definition by setting $\mathbf{x}[0\ldots 2] = \mathbf{y}[0\ldots 2]$. Suppose we have defined $\mathbf{x}$ up to $h_{j-1}$. We then set
\begin{center}
    $\mathbf{x}[r] = \mathbf{y}[r]$, for all $h_{j-1} < r \leq sh_j$.
\end{center}

To specify the next $h_j - sh_j$ bits of $\mathbf{x}$, we use the following lemma, which we will prove in the next section.
\begin{lem}\label{lem:goodpointz}
For every sufficiently large $j$, there is a binary string $\mathbf{z}$ of length $h_j - sh_j$ such that
\begin{equation*}
    \min\limits_{sh_j < r \leq h_j} \left| K_r(x, ax+b) - (1+s)r \right| < \frac{r}{j},
\end{equation*}
where $x$ is any real such that $\mathbf{x}\mathbf{z} \prec x$. Moreover, $\mathbf{z}$ is of the form (*) of Section 5.1.
\end{lem}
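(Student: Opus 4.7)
My plan is to argue by a discrete intermediate value theorem (IVT) applied to the natural family of candidate extensions. For each $n \in \{0, 1, \ldots, m\}$ with $m = h_j - sh_j$, let $x^{(n)}$ denote the real in $[0,1]$ whose binary expansion begins with $\mathbf{x}\mathbf{x_n}$ (the extension past position $h_j$ is irrelevant, as it does not affect $K_r$ for $r \leq h_j$). A useful structural observation is that $x^{(n)} \approx y' + 2^{-(h_j - n)}/a$, where $y'$ is the fixed random prefix, and hence $a x^{(n)} + b \approx a y' + b + 2^{-(h_j - n)}$ up to an error $O(2^{-h_j})$, i.e.\ a pure dyadic shift of $ay' + b$. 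I then define $G(n) := K_{h_j}(x^{(n)}, a x^{(n)} + b) - (1+s) h_j$ and aim to find $n^*$ with $|G(n^*)|$ small; setting $\mathbf{z} = \mathbf{x_{n^*}}$ and choosing $r = h_j$ in the $\min$ will then yield the lemma.

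For the endpoint at $n = 0$, the string $\mathbf{x_0}$ is all zeros, so $x^{(0)}$ shares no information with $(a,b)$ beyond its random prefix. Relativizing the lower-bound technique used in Lemma \ref{lem:boundUpToH} (which combines Lemmas \ref{lem:point}, \ref{lem:lines}, and \ref{lem:oracles} via the Section \ref{ssec:previousworksub} strategy) to this setting, I expect $K_{h_j}(x^{(0)}, a x^{(0)} + b) \geq (s + d) h_j - o(h_j)$, giving $G(0) \geq (d - 1 - o(1)) h_j > 0$ since $d > 1$. For the endpoint at $n = m$, the string $\mathbf{x_m}$ equals the first $m$ bits of $1/a$, and the dyadic-shift structure above lets one recover $y'$ at precision $sh_j$, $1/a$ at precision $m = (1-s)h_j$, and thus $a y' + b$ at precision $h_j$ from $(x^{(m)}, a x^{(m)} + b)$. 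Using symmetry of information (Lemma \ref{lem:unichain}) in the manner of Lemma \ref{lem:computeAB}, I expect to bound $K_{h_j}(x^{(m)}, a x^{(m)} + b)$ strictly below $(1+s) h_j$, so $G(m) < 0$.

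For the Lipschitz step, I will verify that $|G(n+1) - G(n)| = O(\log h_j)$. Although $\mathbf{x_n}$ and $\mathbf{x_{n+1}}$ may differ in many bit positions, the reals differ only by $|x^{(n+1)} - x^{(n)}| \approx 2^{-(h_j - n)}/a$ and, crucially, the $y$-values differ by the \emph{dyadic} $2^{-(h_j - n)}$ independent of $a$. Thus $(x^{(n+1)}, a x^{(n+1)} + b)$ is computable from $(x^{(n)}, a x^{(n)} + b)$ given $a$ (and the index $n$, costing $O(\log h_j)$ bits), and vice versa; applying Lemma \ref{lem:unichain} in both directions yields the claimed Lipschitz bound up to a controlled conditional-complexity term. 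Combining the Lipschitz estimate with the endpoint straddling, the discrete IVT then produces $n^* \in \{0, \ldots, m\}$ with $|G(n^*)| = O(\log h_j) < h_j/j$ for $j$ sufficiently large.

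The step I expect to be the main obstacle is the $n = m$ endpoint analysis: establishing the upper bound $K_{h_j}(x^{(m)}, a x^{(m)} + b) \leq (1+s) h_j - \Omega(h_j)$ requires using the specific dimensional structure of $(a,b)$, and when $d$ substantially exceeds $1 + s$ the bound at $r = h_j$ may be too weak. In that regime, the argument will need to exploit the freedom in the $\min$ over $r$, choosing a smaller precision $r \in (sh_j, h_j)$ at which the $n = m$ complexity already dips below $(1+s) r$, and verifying that the Lipschitz and IVT steps transfer uniformly to this $r$. Knitting these pieces together — matching a single precision $r$ at which $G_r(0) > 0$, $G_r(m) < 0$, and the Lipschitz constant is $o(r/j)$ — is the technical crux of the lemma.
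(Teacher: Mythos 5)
Your overall architecture is right — a discrete intermediate value theorem over the one-parameter family $x_n$ (extensions by $\mathbf{x_n}$), together with the Lipschitz bound $|K_r(x_{n+1},ax_{n+1}+b)-K_r(x_n,ax_n+b)| = O(\log h_j)$, is exactly what the paper does. But your endpoint analysis is inverted, and this is not a cosmetic slip: it reflects a misreading of what the encoding of $a$-bits accomplishes.

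You claim that $x^{(0)}$ (the all-zeros extension) should have \emph{high} complexity $\geq (s+d)h_j$ via the Section \ref{ssec:previousworksub} machinery, and that $x^{(m)}$ (the full $1/a$ extension) should have \emph{low} complexity $<(1+s)h_j$ by a Lemma \ref{lem:computeAB}-style argument. Both directions are wrong. The lower-bound machinery in Lemma \ref{lem:boundUpToH} depends crucially on Lemma \ref{lem:computeAB}: the encoded $a$-bits let you deduce a fine approximation of $(a,b)$ from $(x,ax+b)$, which restricts the search for competing lines in Lemma \ref{lem:point}. The all-zeros $x^{(0)}$ has no such encoding, so that machinery gives nothing; in fact the zero padding is computable, so $K_{h_j}(x^{(0)}) \leq sh_j + O(\log h_j)$ and therefore $K_{h_j}(x^{(0)},ax^{(0)}+b) \leq (1+s)h_j + O(\log h_j)$ — an \emph{upper} bound, placing $x^{(0)}$ at the low end. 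Conversely, $x^{(m)}$ has the $1/a$-bits encoded, so the analog of Lemma \ref{lem:computeAB} (item (3) of Lemma \ref{lem:propertyofxn}) holds, and running the Lemma \ref{lem:point}/\ref{lem:lines}/\ref{lem:oracles} pipeline with a suitable oracle gives the \emph{lower} bound $K_r(x_m,ax_m+b) \geq (1+s)r - r/j$ for all $r \in [sh_j,h_j]$. Encoding $1/a$ does not lower the joint complexity — intuitively it blocks the kind of information cancellation between $x$ and $ax+b$ that could make the pair cheap, pushing the complexity up, not down.

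One more, smaller, deviation: you work at the fixed precision $r=h_j$, noting at the end that this may be too weak. The paper instead defines $M_n = \min_{sh_j \leq r \leq h_j} K_r(x_n,ax_n+b)/r$ and does IVT on $M_n$, which gracefully handles the case $d$ large by letting the minimizing $r$ float. Your Lipschitz step is correct as stated and would transfer to $M_n$, but the endpoint bounds must be flipped and rederived via the mechanisms sketched above before the IVT closes.
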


For now, we assume the truth of this lemma. If the current $j$ is not sufficiently large, take $\mathbf{z}$ to be the string of all zeros. Otherwise, if $j$ is sufficiently large, we let $\mathbf{z}$ be such a binary string. We then set
\begin{center}
    $\mathbf{x}[r] = \mathbf{z}[r-sh_j]$, for all $sh_j < r \leq h_j$,
\end{center}
completing the inductive step. Finally, we let $x_{a,b,s}$ be the real number with binary expansion $\mathbf{x}$.

\begin{prop}\label{prop:highBoundX}
Let $x = x_{a,b,s}$ be the real we just constructed. Then for every $j$,
\begin{enumerate}
    \item $K^{a,b}_{sh_j}(x) \geq sh_j - O(\log h_j)$, and
    \item $K_r(x \mid a,b) \geq sh_j + r - h_j$, for every $h_j \leq r < sh_{j+1}$.
\end{enumerate}
\end{prop}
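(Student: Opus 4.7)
The plan is to parallel the structure of Lemma \ref{lem:boundOnXAB}, exploiting two quantitative features of the construction. First, each $\mathbf{z}_i$ is of the form (*) and is therefore fully specified by $(a,b)$ together with a single natural number $n_i \leq h_i - sh_i$, giving $K(\mathbf{z}_i \mid (a,b)) \leq O(\log h_i)$. Second, the total cumulative $\mathbf{z}$-length through stage $j-1$ satisfies $\sum_{i<j}(h_i - sh_i) \leq 2h_{j-1}$, and $h_{j-1} \leq \log h_j$ by the growth condition $h_j \geq 2^{h_{j-1}}$; hence the total overwritten length is $O(\log h_j)$.

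For item 1, I will exhibit a decoder that, given $x \uhr sh_j$, the oracle $(a,b)$, and $O(\log h_j)$ bits of advice, outputs $y \uhr sh_j$. The advice consists of the index $j$, the stage boundaries $h_0, \ldots, h_{j-1}$, the shift parameters $n_1, \ldots, n_{j-1}$ (so that each $\mathbf{z}_i$ is computable from $(a,b)$), and the raw bit-string $y[sh_i+1..h_i]$ for $i < j$, i.e.\ the $y$-bits overwritten by $\mathbf{z}$-segments inside $x\uhr sh_j$. By the two features above, all of this fits in $O(\log h_j)$ bits. The decoder locates the $\mathbf{z}$-segments using the boundaries and shifts, deletes them from $x \uhr sh_j$, and splices in the advice-supplied $y$-bits to assemble $y \uhr sh_j$. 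Therefore
\[ K^{a,b}(y \uhr sh_j) \leq K^{a,b}(x \uhr sh_j) + O(\log h_j)\,, \]
and combining with the randomness of $y$ (inequality (\ref{eq:lowerBoundOnY})) and the relativized Corollary \ref{cor:trunc} gives $K^{a,b}_{sh_j}(x) \geq sh_j - O(\log h_j)$.

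For item 2, I apply the relativized chain rule (Lemma \ref{lem:unichain}(ii)):
\[ K_r(x \mid a,b) \geq K^{a,b}_{h_j}(x) + K^{a,b}_{r,h_j}(x \mid x) - O(\log r)\,. \]
The first term is at least $K^{a,b}_{sh_j}(x) - O(\log h_j) \geq sh_j - O(\log h_j)$ by item 1 and truncation. For the second, since $h_j < r < sh_{j+1}$ the positions $(h_j, r]$ lie entirely inside the $y$-segment of stage $j+1$, so $x[h_j+1..r] = y[h_j+1..r]$; via the relativized Corollary \ref{cor:trunccond} this reduces the required lower bound to one for $K^{a,b}(y[h_j+1..r] \mid x \uhr h_j)$. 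Since $x \uhr h_j$ can be computed from $y \uhr h_j$, $(a,b)$, and $O(\log h_j)$ bits of advice (the boundaries and shifts through stage $j$), conditioning on $x \uhr h_j$ is no stronger than conditioning on $y \uhr h_j$, yielding
\[ K^{a,b}(y[h_j+1..r] \mid x \uhr h_j) \geq K^{a,b}(y[h_j+1..r] \mid y \uhr h_j) - O(\log h_j) \geq r - h_j - O(\log r)\,, \]
the final inequality by one more chain-rule step together with (\ref{eq:lowerBoundOnY}). Assembling the pieces gives the claim.

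The main obstacle is quantitative control of the information loss introduced by the $\mathbf{z}$-segments: a priori $x \uhr sh_j$ encodes strictly fewer random $y$-bits than $y \uhr sh_j$, since $\mathbf{z}$ overwrites random bits with deterministic ones. The entire argument goes through only because the total overwritten length $(1-s)\sum_{i<j}h_i$ is $O(\log h_j)$, and this is precisely why the construction enforces the doubly-exponential gap $h_j \geq 2^{h_{j-1}}$. Beyond this bookkeeping, everything else is a relativization and repackaging of chain-rule and truncation inequalities already in the paper.
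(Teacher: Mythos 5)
Your proposal is correct, and the key quantitative insight---that the doubly-exponential growth $h_j \geq 2^{h_{j-1}}$ forces the cumulative length of all $\mathbf{z}$-segments below $sh_j$ to be $O(\log h_j)$, so the overhead of moving between $x$ and $y$ is logarithmic---is exactly the one the paper rests on. For item~(1) you and the paper are doing the same thing; the paper is more terse (it simply observes that $\mathbf{x}\uhr sh_j$ and $\mathbf{y}\uhr sh_j$ agree from position $h_{j-1}$ on, so their $K^{a,b}$-complexities differ by at most $h_{j-1}+O(\log h_j) \leq O(\log h_j)$), while your explicit decoder carries a little extra baggage you don't actually need for this direction---you are recovering $y$ from $x$, so the $\mathbf{z}$-segments are simply deleted and replaced by the advice-supplied $y$-bits; the shift parameters $n_i$ and the oracle $(a,b)$ play no role here. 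For item~(2) your route is genuinely different in structure: you split via the chain rule (Lemma~\ref{lem:unichain}(ii)) into $K^{a,b}_{h_j}(x) + K^{a,b}_{r,h_j}(x\mid x)$, bound the first by item~(1) and monotonicity, and bound the second by one more decoder/chain-rule argument reducing conditioning on $x\uhr h_j$ to conditioning on $y\uhr h_j$. The paper instead works directly with the truncation $\mathbf{x}\uhr r$: it compares it to $\mathbf{y}[0\ldots sh_j]\mathbf{z}\mathbf{y}[h_j\ldots r]$ (costing $h_{j-1}+O(\log h_j)$ for the altered prefix), discards $\mathbf{z}$ at $O(\log r)$ cost, and then invokes randomness of $\mathbf{y}$ on the single spliced string $\mathbf{y}[0\ldots sh_j]\mathbf{y}[h_j\ldots r]$. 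Both arrive at $sh_j + r - h_j - O(\log h_j)$; the paper's version is one string comparison, yours routes through conditional complexities, which is slightly more machinery but equally valid.
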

\begin{proof}
To see (1), by our construction, $\mathbf{x}[h_{j-1}\ldots sh_j] = \mathbf{y}[h_{j-1}\ldots sh_j]$. Thus, by Corollary \ref{cor:trunc},
\begin{align*}
    K^{a,b}_{sh_j}(x) &= K^{a,b}(\mathbf{x}) - O(\log h_j)\\
    &\geq K^{a,b}(\mathbf{y}) - h_{j-1} - O(\log h_j)\\
    &\geq sh_j - h_{j-1} - O(\log h_j)\\
    &\geq sh_j -O(\log h_j).
\end{align*}

For item (2), by our construction $\mathbf{x}[h_{j}\ldots r] = \mathbf{y}[h_{j}\ldots r]$. Therefore,
by Corollary \ref{cor:trunc} and our construction of $x$,
\begin{align*}
    K_{r}(x\mid a,b) &\geq K^{a,b}_r(x)\\
    &= K^{a,b}(\mathbf{x}) - O(\log h_j)\\
    &\geq K^{a,b}(\mathbf{y}[0\ldots sh_j]\mathbf{z}\mathbf{y}[h_j\ldots r]) - h_{j-1} - O(\log h_j)\\
    &\geq K^{a,b}(\mathbf{y}[0\ldots sh_j]\mathbf{y}[h_j\ldots r])\\
    &\;\;\;\;\;\;\;\;\;- h_{j-1} - O(\log h_j)\tag*{[$\mathbf{z}$ computable from $a$]}\\
    &\geq sh_j + r - h_j - O(\log h_j) \tag*{[Definition of $\mathbf{y}$]}
\end{align*}
\end{proof}

We now show, again assuming Lemma \ref{lem:goodpointz}, that $\dim(x, ax+b) = 1 + s$, where $x = x_{a,b,s}$ is the point we have just constructed. 

We begin by proving an upper bound on $\dim(x, ax+ b)$. Note that this essentially follows from our choice of $\mathbf{z}$.
\begin{prop}\label{prop:highdimupperbound}
Let $(a,b)$ be a slope intercept pair, $s \in (0,1)$ and $\gamma \in \Q$ be positive. Let $x = x_{a,b,s}$ be the point we have just constructed. Then 
\begin{center}
$\dim(x, ax + b) \leq (1 + s) + \gamma$.
\end{center}
\end{prop}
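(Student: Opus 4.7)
The plan is to exploit Lemma \ref{lem:goodpointz} directly: by construction of $x = x_{a,b,s}$, at every sufficiently large stage $j$ the string $\mathbf{z}$ appended to $\mathbf{x}$ was chosen to witness the conclusion of that lemma. Thus, for each sufficiently large $j$, there is some precision $r_j \in (sh_j, h_j]$ such that
\begin{equation*}
\bigl| K_{r_j}(x, ax+b) - (1+s)r_j \bigr| < \frac{r_j}{j},
\end{equation*}
and in particular $K_{r_j}(x, ax+b) \leq (1+s)r_j + r_j/j$.

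First I would fix $J \in \N$ large enough that (a) Lemma \ref{lem:goodpointz} applies at every stage $j \geq J$, and (b) $1/J < \gamma$. For each $j \geq J$, extract the witness precision $r_j \in (sh_j, h_j]$ as above. Because $r_j > sh_j \geq s \cdot 2^{h_{j-1}}$, the sequence $\{r_j\}_{j \geq J}$ is strictly increasing and diverges to infinity, so it provides an infinite subsequence of precisions along which the complexity ratio is controlled.

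Taking the liminf along this subsequence then yields
\begin{align*}
\dim(x, ax+b) &= \liminf_{r \to \infty} \frac{K_r(x, ax+b)}{r} \\
&\leq \liminf_{j \to \infty} \frac{K_{r_j}(x, ax+b)}{r_j} \\
&\leq \liminf_{j \to \infty} \left((1+s) + \frac{1}{j}\right) \\
&= 1 + s \\
&\leq (1+s) + \gamma,
\end{align*}
which is the desired bound.

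There is no genuine obstacle in this proposition: once Lemma \ref{lem:goodpointz} is in hand, the upper bound is essentially a one-line consequence of the definition of effective dimension as a liminf. All of the real work has been front-loaded into the construction of $\mathbf{z}$ via Lemma \ref{lem:goodpointz}, whose proof (the discrete intermediate-value/mean-value argument over the candidate extensions $\mathbf{x_0}, \ldots, \mathbf{x_m}$) is where the substantive content lies. The slack $\gamma$ in the statement is not tight --- the argument actually gives $\dim(x, ax+b) \leq 1+s$ --- but phrasing the proposition with an arbitrary positive $\gamma$ is convenient for combining with the matching lower bound in the later stages of the high-dimensional argument.
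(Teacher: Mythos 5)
Your proof is correct and takes essentially the same approach as the paper: invoke the guarantee from Lemma \ref{lem:goodpointz} (which the construction of $\mathbf{x}$ enforces at each sufficiently large stage), extract a witness precision $r_j \in (sh_j, h_j]$ for each such $j$, and bound $\dim(x,ax+b)$ by the liminf of $K_{r_j}(x,ax+b)/r_j$ along this diverging subsequence. The only cosmetic difference is that the paper keeps the error at $\gamma r/4$ (choosing $j$ large enough that $1/j < \gamma/4$) whereas you carry $1/j$ directly and let it vanish, obtaining the marginally sharper $\dim(x,ax+b) \leq 1+s$; both are fine and both rely entirely on Lemma \ref{lem:goodpointz} plus the elementary fact that the liminf of a subsequence dominates the liminf of the full sequence.
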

\begin{proof}
Let $j$ be sufficiently large. By our construction of $x$, 
\begin{equation}\label{eq:boundOnIntervalSH}
    \min\limits_{sh_j < r \leq h_j} \left| K_r(x, ax+b) - (1+s)r \right| < \frac{\gamma r}{4} 
\end{equation}

Therefore,
\begin{align*}
    \dim(x, ax+b) &= \liminf\limits_{r} \frac{K_r(x, ax+b)}{r}\\
    &\leq \liminf\limits_{j} \min\limits_{sh_j < r \leq h_j} \frac{K_r(x, ax+b)}{r}\\
    &\leq \liminf\limits_{j} \min\limits_{sh_j < r \leq h_j} \frac{(1+s)r + \gamma r/4}{r}\\
    &= \liminf\limits_{j} \min\limits_{sh_j < r \leq h_j} 1+s + \gamma/4\\
    &= 1 + s + \frac{\gamma}{4}.
\end{align*}
\end{proof}

We break the proof of the lower bound on $\dim(x, ax+b)$ into two parts. In the first, we give lower bounds on $K_r(x,ax+b)$ on the interval $r\in (sh_j, h_j]$. Note that this essentially follows from inequality (\ref{eq:boundOnIntervalSH}).
\begin{prop}\label{prop:highlowerbound}
Let $(a,b)$ be a slope intercept pair, $s \in (0,1)$, $\gamma \in \Q$ be positive and $j$ be sufficiently large. Let $x = x_{a,b,s}$ be the point we have just constructed. Then 
\begin{center}
$K_r(x, ax + b) \geq (1 + s - \gamma)r$
\end{center}
for all $sh_{j} < r \leq h_j$
\end{prop}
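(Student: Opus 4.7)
The plan is to follow the template of the proof of Lemma \ref{lem:boundUpToH}, applying Lemma \ref{lem:point} at each precision $r\in(sh_j,h_j]$, but with the oracle parameter $\eta$ taken close to $1$ rather than close to $d$. This produces the desired target $(1+s)r$ in place of $(s+d)r$, which is the only change needed to accommodate $\dim(a,b)>1$.

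Concretely, I would fix rationals $\eta,\varepsilon$ with $1-\gamma/8<\eta<1$ and $4\varepsilon/(1-\eta)\leq\gamma/4$, set $\delta=1-\eta$, and let $D=D(r,(a,b),\eta)$ be the oracle from Lemma \ref{lem:oracles}. Condition (i) of Lemma \ref{lem:point} is immediate: $K^D_r(a,b)\leq\eta r+O(\log r)$. For condition (ii), take $m=r-sh_j$; any line $(u,v)\in B_{2^{-m}}(a,b)$ meeting $(x,ax+b)$ at distance $t=-\log\|(a,b)-(u,v)\|$ satisfies $r-t\leq sh_j$, so Proposition \ref{prop:highBoundX}(1) relativized via Lemma \ref{lem:oracles}(ii) gives $K^D_{r-t,r}(x\mid a,b)\geq r-t-O(\log r)$. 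Combining this with Lemma \ref{lem:lines} and splitting into the cases $K^D_t(a,b)=\eta r$ and $K^D_t(a,b)=K_t(a,b)\geq dt-o(t)$ (the latter uses $d>1>\eta$) yields $K^D_r(u,v)\geq(\eta-\varepsilon)r+(1-\eta)(r-t)$ in both. Applying Lemma \ref{lem:point}, expanding $K^D_r(a,b,x)\geq\eta r+sh_j-O(\log r)$ via Lemma \ref{lem:unichain}, Lemma \ref{lem:oracles}, and Proposition \ref{prop:highBoundX}(1), and using $sh_j\geq sr$ (since $r\leq h_j$), reduces the problem to controlling the single remaining term $K_{r-sh_j,r}(a,b\mid x,ax+b)$.

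Bounding this last term is the main obstacle, and it is precisely where inequality (\ref{eq:boundOnIntervalSH}) enters. In the low-dimensional case, Lemma \ref{lem:computeAB} gave the bound $O(\log h_j)$ from a direct encoding of $a$ into $x$; here the encoded string $\mathbf{z}$ has the less informative shape (*), so that simple argument is not available at every $r$. At the distinguished $r^*\in(sh_j,h_j]$ supplied by Lemma \ref{lem:goodpointz}, however, the combination of $K_{r^*}^D(a,b,x)\leq\eta r^*+sh_j+O(\log r^*)$ (from the construction of $D$ together with Proposition \ref{prop:highBoundX}) and $K_{r^*}(x,ax+b)\geq(1+s)r^*-r^*/j$, together with symmetry of information, forces $K_{r^*-sh_j,r^*}(a,b\mid x,ax+b)\leq\gamma r^*/8$. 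I would then transfer this bound to an arbitrary $r\in(sh_j,h_j]$ by exploiting the explicit structure of $\mathbf{z}$: the leading zero block trivially allows $b$ to be recovered from $(x,ax+b)$ up to a logarithmic term when $r$ lies in that portion, while in the upper portion the revealed prefix of $1/a$ plays the same role that $a$ played in Lemma \ref{lem:computeAB}. Substituting the resulting bound $K_{r-sh_j,r}(a,b\mid x,ax+b)\leq\gamma r/8+O(\log r)$ into the estimate from Lemma \ref{lem:point} yields $K_r(x,ax+b)\geq(1+s-\gamma)r$, as required. The delicate step will be the transfer from $r^*$ to arbitrary $r$; an alternative would be to bypass it by reapplying Lemma \ref{lem:point} at each $r$ with parameters tailored to $r$, but in either case the non-constructive choice of $\mathbf{z}$ via Lemma \ref{lem:goodpointz} is what replaces the direct encoding lemma of the low-dimensional case.
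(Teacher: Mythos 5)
Your proposal takes a fundamentally different — and substantially harder — route than the paper, and it contains a genuine gap. In the paper, Proposition~\ref{prop:highlowerbound} is a one-line consequence of the construction: the proof of Lemma~\ref{lem:goodpointz} chooses $n$ so that $M_n = \min_{sh_j \leq r \leq h_j} K_r(x_n, ax_n+b)/r$ satisfies $M_n \geq 1+s-1/j$, and setting $\mathbf{z}=\mathbf{x_n}$ therefore guarantees $K_r(x,ax+b)\geq(1+s-1/j)r$ at \emph{every} $r\in(sh_j,h_j]$, not merely at one distinguished precision. The proposition then follows by citing this fact. The non-constructive ``mean value theorem'' argument is present in Lemma~\ref{lem:goodpointz} exactly because one cannot hope to verify the bound $r$-by-$r$ via Lemma~\ref{lem:point} for an arbitrary $\mathbf{x_n}$; the $r$-by-$r$ application of the Lutz--Stull machinery is carried out only for the extreme point $x_m$ (where the full prefix of $1/a$ is encoded and Lemma~\ref{lem:propertyofxn}(3) gives $K_{r-sh_j,r}(a,b\mid x_m,ax_m+b)\leq O(\log r)$), and the bound for the chosen $x=x_n$ is then obtained purely by continuity of $n\mapsto M_n$.

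Your proposed re-derivation via Lemma~\ref{lem:point} therefore has no chance of going through for the actual constructed $x$. You correctly locate the obstruction — the term $K_{r-sh_j,r}(a,b\mid x,ax+b)$ — but the ``transfer from $r^*$'' fix does not work. For precisions $r$ lying in the leading zero-block of $\mathbf{z}=\mathbf{x_n}$ (that is, $sh_j<r\leq sh_j+(m-n)$), the prefix $x\uhr r$ reveals no bits of $a$ beyond precision $sh_j$, so there is no mechanism to bound $K_{r-sh_j,r}(a,b\mid x,ax+b)$ by anything better than roughly $K_{r-sh_j}(a,b)\approx d(r-sh_j)$; with $d>1$ this overwhelms the budget in Lemma~\ref{lem:point}, and the resulting estimate (roughly $r+sh_j-d(r-sh_j)$) falls below $(1+s)r$ once $r-sh_j$ is a nontrivial fraction of $r$. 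Nor does the knowledge of a good bound at the single distinguished $r^*$ propagate to neighboring $r$ in any useful way, since $K_{r-sh_j,r}(\cdot\mid\cdot)$ at different $r$ are not tightly coupled. The correct move is simply to quote the property of $x$ established in the construction; the heavy lifting all happened inside Lemma~\ref{lem:goodpointz}.
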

\begin{proof}
Let $j$ be sufficiently large and $sh_{j} < r \leq h_j$. Then, by (\ref{eq:boundOnIntervalSH}),
\begin{center}
    $K_r(x, ax+b) \geq (1+s)r - \gamma r$.
\end{center}

\end{proof}

We now give lower bounds on $K_r(x, ax+b)$ on the interval $r \in (h_{j-1}, sh_j]$. The proof of this lemma is very similar to the analogous lemma for low-dimensional lines (Lemma \ref{lem:boundAfterH}). Intuitively, the proof is as follows. Using the previous lemma, we can, given a $2^{-h_j}$-approximation of $(x, ax +b)$, compute a $2^{-h_j}$-approximation of $(a,b)$ \textit{with a small amount of extra bits}. Having done so, we have to compute the last $r-h_j$ bits of $(a,b)$. Importantly, since $r > h_j$, the last $r - h_j$ bits of $x$ are maximal. Thus, we can simply lower the complexity of the last $r-h_j$ bits of $(a,b)$ so that the complexity of these bits is roughly $s(r - h_j)$. Thus, we are again, morally, in the case where $\dim(x) \geq \dim(a,b)$ and the techniques of Section \ref{ssec:previousworksub} work. We now formalize this intuition.
\begin{lem}\label{lem:highlowerboundHSh}
Let $(a,b)$ be a slope intercept pair, $s \in (0,1)$, $\gamma \in \Q$ be positive and $j$ be sufficiently large. Let $x = x_{a,b,s}$ be the point we have just constructed. Then 
\begin{center}
$K_r(x, ax + b) \geq (1 + s - \gamma)r$
\end{center}
for all $h_{j-1} < r \leq sh_j$
\end{lem}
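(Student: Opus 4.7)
The plan is to mirror the proof of Lemma \ref{lem:boundAfterH}, but with $h_{j-1}$ taking the role that $h_j$ played there, and with Proposition \ref{prop:highlowerbound} providing the ``boundary condition'' at precision $h_{j-1}$. Fix rationals $\hat{s}\in\Q\cap(0,s)$ and $\hat{d}\in\Q\cap(0,d)$ with $s-\hat{s},\,d-\hat{d}\in(\gamma/8,\gamma/4)$, and set
\[
\alpha=\frac{s(r-h_{j-1})+d\,h_{j-1}}{r},\qquad
\eta=\frac{\hat{s}(r-h_{j-1})+\hat{d}\,h_{j-1}}{r},\qquad
\ve=\gamma^2/128,
\]
so that $\gamma/8<\alpha-\eta<\gamma/4$, $4\ve/(\alpha-\eta)\le\gamma/4$, and $K(\eta,\ve)=O(\log r)$. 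Let $D=D(r,(a,b),\eta)$ be the oracle produced by Lemma \ref{lem:oracles}.

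To verify condition (ii) of Lemma \ref{lem:point}, fix any $(u,v)\in B_{2^{-h_{j-1}}}(a,b)$ with $ux+v=ax+b$, and set $t=-\log\|(a,b)-(u,v)\|$, so $t\ge h_{j-1}$. Proposition \ref{prop:highBoundX}(1), combined with symmetry of information, yields $K_{r-t,r}(x\mid a,b)\ge r-t-O(\log h_j)$ (we use $r-t\le sh_j$); feeding this into Lemma \ref{lem:lines} and applying Lemma \ref{lem:oracles}(ii) gives $K^D_r(u,v)\ge K^D_t(a,b)+(r-t)-O(\log h_j)$. In the case $K^D_t(a,b)=\eta r$ the required inequality $K^D_r(u,v)\ge(\eta-\ve)r+(\alpha-\eta)(r-t)$ is immediate since $\alpha-\eta\le 1$; in the case $K^D_t(a,b)=K_t(a,b)\ge dt-o(t)$, a short algebraic rearrangement using $d>1>s$ and $t\ge h_{j-1}$ yields the same bound.

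With the hypotheses of Lemma \ref{lem:point} in hand (taking $\delta=\alpha-\eta$ and $m=h_{j-1}$), I would derive
\[
K_r(x,ax+b)\ge K^D_r(a,b,x)-K^D_{h_{j-1},r}(a,b\mid x,ax+b)-\tfrac{\gamma}{4}r-O(\log r).
\]
The first term is bounded below via Lemma \ref{lem:unichain}, Lemma \ref{lem:oracles}, and $K_r(x\mid a,b)\ge r-O(\log h_j)$ (from Proposition \ref{prop:highBoundX}(1), valid since $r\le sh_j$), giving $K^D_r(a,b,x)\ge(\alpha+1)r-\gamma r/4=(1+s)r+(d-s)h_{j-1}-\gamma r/4$. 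For the conditional term, symmetry of information gives
\[
K^D_{h_{j-1},r}(a,b\mid x,ax+b)\le K_{h_{j-1}}(a,b,x)-K_{h_{j-1}}(x,ax+b)+O(\log r);
\]
the construction of $x$ (random bits of total length $\approx sh_{j-1}$ interspersed with blocks computable from $a$) together with the definition of $h_{j-1}$ gives $K_{h_{j-1}}(a,b,x)\le(d+s)h_{j-1}+O(\log h_{j-1})$, and Proposition \ref{prop:highlowerbound} at precision $h_{j-1}$ supplies $K_{h_{j-1}}(x,ax+b)\ge(1+s-\gamma/8)h_{j-1}$, bounding the conditional complexity by $(d-1)h_{j-1}+O(\gamma h_{j-1})$.

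The main obstacle is the cancellation between the $(d-s)h_{j-1}$ surplus in the lower bound on $K^D_r(a,b,x)$ and the unavoidable $(d-1)h_{j-1}$ deficit in the upper bound on the conditional complexity. Unlike in Lemma \ref{lem:boundAfterH}, where $d\le 1$ made the analogous deficit non-positive, here both quantities are genuinely linear in $h_{j-1}$; the point is that their difference is exactly $(1-s)h_{j-1}\ge 0$, so the two terms offset cleanly. After absorbing the $O(\gamma r)$ error terms one recovers $K_r(x,ax+b)\ge(1+s-\gamma)r$, completing the proof.
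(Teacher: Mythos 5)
Your overall strategy is the right one---mirroring Lemma~\ref{lem:boundAfterH} with $h_{j-1}$ in place of $h_j$, using Proposition~\ref{prop:highlowerbound} at precision $h_{j-1}$ to bound the conditional term, and invoking Lemma~\ref{lem:point} with $m=h_{j-1}$ and $\delta=\alpha-\eta$. That is the same skeleton as the paper's proof. However, there is a genuine gap in your bookkeeping, and it is concentrated in the step where you write $K_r(x\mid a,b)\ge r-O(\log h_j)$ and then drop the error, concluding $K^D_r(a,b,x)\ge(\alpha+1)r-\gamma r/4=(1+s)r+(d-s)h_{j-1}-\gamma r/4$.

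The problem is that the $O(\log h_j)$ in Proposition~\ref{prop:highBoundX}(1) is not an $o(r)$ quantity on the range $h_{j-1}<r\le sh_j$. Unwinding that proposition, the actual error in $K_r(x\mid a,b)\ge r-\text{(error)}$ is $(1-s)h_{j-1}+O(\log r)$ (the $(1-s)h_{j-1}$ accounts for the bits of $x$ on $(sh_{j-1},h_{j-1}]$, which are computable from $a$); the paper's statement absorbs $h_{j-1}$ into $O(\log h_j)$ using $h_{j-1}\le\log h_j$, but $\log h_j$ can vastly exceed $h_{j-1}$ and hence exceed $\gamma r$ when $r$ is near $h_{j-1}$. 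So you cannot simply discard it. Carrying the $(1-s)h_{j-1}$ through gives $K^D_r(a,b,x)\ge(1+s)r+(d-1)h_{j-1}-\gamma r/4-O(\log r)$, \emph{not} $(1+s)r+(d-s)h_{j-1}-\gamma r/4$. The surplus is therefore $(d-1)h_{j-1}$, and it cancels \emph{exactly} against the $(d-1)h_{j-1}$ deficit from the conditional-complexity term; there is no leftover $(1-s)h_{j-1}$ of headroom, contrary to your closing paragraph. The final inequality $K_r(x,ax+b)\ge(1+s-\gamma)r$ still comes out, but only because the $(1-s)h_{j-1}$ ``surplus'' you claim is illusory and was already spent inside the suppressed $O(\log h_j)$. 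The same imprecision infects your verification of condition~(ii): using $K_{r-t,r}(x\mid a,b)\ge(r-t)-O(\log h_j)$ and dropping the error requires $O(\log h_j)\le\ve r$, which fails for $r$ near $h_{j-1}$. The paper instead uses the uniformly valid bound $K_{r-t,r}(x\mid a,b)\ge s(r-t)-O(\log r)$ for condition~(ii), and tracks $K_r(x\mid a,b)\ge sh_{j-1}+(r-h_{j-1})-O(\log r)$ precisely for the main estimate, which is what makes the $(d-1)h_{j-1}$ cancellation go through without slack.
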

\begin{proof}
Intuitively, we will use the approximation of $(x, ax+b)$ at precision $h_{j-1}$ to compute $(a, b)$ at precision $h_{j-1}$. Then we will only search for candidate lines within $2^{-h_{j-1}}$ of $(a,b)$. Formally, the argument proceeds as follows. 

We first show that we can compute $(a,b)$ to within $2^{-h_{j-1}}$ with an approximation of $(x, ax+b)$, with few additional bits of information. By Lemma \ref{lem:unichain} and inequality (\ref{eq:boundOnIntervalSH})
\begin{align}
K_{h_{j-1}, r}(a, b, x \, | \, x, ax + b) &\leq K_{h_{j-1},h_{j-1}}(a,b,x \, | \, x, ax + b) + O(\log h_j)) \tag*{}\\
&= K_{h_{j-1}}(a,b,x)- K_{h_{j-1}}(x, ax + b) \tag*{[Lemma \ref{lem:unichain}]} \\
&\leq K_{h_{j-1}}(a,b,x) - (1+s)h_{j-1}  + \frac{\gamma}{4}h_{j-1}  \tag*{[(\ref{eq:boundOnIntervalSH})]}\\
&= K_{h_{j-1}}(a,b) + K_{h_{j-1}}(x\mid a,b) \tag*{} \\
&\;\;\;\;\;\;\;\;\;\;\;\;\;\;\;\;\;\;\;\;\;- (1+s)h_{j-1}  + \frac{\gamma}{4}h_{j-1}  \tag*{[Lemma \ref{lem:unichain}]}\\
&\leq dh_{j-1} + h_j2^{-j}  + K_{h_{j-1}}(x\mid a,b) \tag*{}\\
&\;\;\;\;\;\;\;\;\;\;\;\;\;\;\;\;\;\;- (1+s)h_{j-1}  + \frac{\gamma h_{j-1}}{4} \tag*{[Definition $h_j$]}\\
&\leq dh_{j-1} + h_j2^{-j} + sh_{j-1} \tag*{}\\
&\;\;\;\;\;\;\;\;\;\;\;\;\;\;\;\;\;\;- (1+s)h_{j-1}  + \frac{\gamma h_{j-1}}{4} \tag*{[Proposition \ref{prop:highBoundX}]}\\
&\leq dh_j +sh_{j-1} \tag*{} \\
&\;\;\;\;\;\;\;\;\;\;\;\;\;- (1+s)h_{j-1} + \frac{\gamma h_{j-1}}{2}\tag*{[$j$ large]}\\
&\leq dh_j  - h_j + \frac{\gamma h_{j-1}}{2} \label{eq:boundHighDim3}.
\end{align}
Thus, we can, given a $2^{-r}$ approximation of $(x, ax+b)$, compute a $2^{-h_{j-1}}$-approximation of $(a, b)$ with   
\begin{center}
    $(d - 1)h_j + \frac{\gamma h_{j-1}}{2}$
\end{center}
additional bits of information. Knowing $(a,b)$ to precision $h_{j-1}$ allows us to search for candidate lines within $2^{-h_{j-1}}$ of $(a,b)$, i.e., using Lemma \ref{lem:point} with $m = h_{j-1}$. 

Let $\hat{s} \in \Q\cap (0,s)$ be a dyadic rational such that 
\begin{center}
    $\gamma/8 < s - \hat{s} < \gamma / 4$.
\end{center}
Let $\hat{d} \in \Q \cap (0,\dim(a,b))$ be a dyadic rational such that 
\begin{center}
    $\gamma / 8 < \dim(a,b) - \hat{d} < \gamma / 4$. 
\end{center}
Define 
\begin{center}
$\alpha = \frac{s(r - h_{j-1}) + dh_{j-1}}{r}$.
\end{center}
Define 
\begin{center}
$\eta = \frac{\hat{s}(r - h_{j-1}) + \hat{d}h_{j-1}}{r}$.
\end{center}
Finally, let $\ve = \gamma^2/64$. Note that
\begin{align}
    \alpha - \eta &= \frac{s(r - h_{j-1}) + dh_{j-1} - \hat{s}(r - h_{j-1}) - \hat{d}h_{j-1}}{r}\tag*{}\\
    &= \frac{(s-\hat{s})(r - h_{j-1}) + (d - \hat{d})h_{j-1} }{r}\tag*{}\\
    &\leq \frac{\frac{\gamma}{4}(r - h_{j-1}) + \frac{\gamma}{4}h_{j-1} }{r}\tag*{}\\
    &= \frac{\gamma}{4}\label{eq:boundAlphaEta}
\end{align}
Similarly,
\begin{align}
    \alpha - \eta &= \frac{s(r - h_{j-1}) + dh_{j-1} - \hat{s}(r - h_{j-1}) - \hat{d}h_{j-1}}{r}\tag*{}\\
    &= \frac{(s-\hat{s})(r - h_{j-1}) + (d - \hat{d})h_{j-1} }{r}\tag*{}\\
    &> \frac{\frac{\gamma}{8}(r - h_{j-1}) + \frac{\gamma}{4}h_{j-1} }{r}\tag*{}\\
    &= \frac{\gamma}{8}
\end{align}
In particular,
\begin{equation}\label{eq:boundonEpsilonOverAlphaMinusEta}
    \frac{4\ve}{\alpha - \eta} \leq \gamma / 4.
\end{equation}
We also note that
\begin{equation}\label{eq:Ketaconstant}
    K(\ve, \eta) \leq K(\gamma, \hat{s}, \hat{d}, r, h_{j-1}) \leq O(\log r),
\end{equation}
since $j$ was chosen to be sufficiently large and $\gamma$ is constant.

Let $D = D(r, (a,b), \eta)$ be the oracle of Lemma \ref{lem:oracles}. We now show that the conditions of Lemma \ref{lem:point} are satisfied for these choices $a, b, \eta, \ve, r$ and $\delta = \alpha - \eta$, $m = h_{j-1}$ and $A = D$.

Let $(u,v)$ be a line such that $t := \|(a,b) - (u,v)\| \geq h_{j-1}$, and $ux + v = ax + b$. Then, by Lemmas \ref{lem:lines}, \ref{lem:oracles}, and Proposition \ref{prop:highBoundX},
\begin{align}
K^D_r(u,v) &\geq K^D_t(a,b) + K^D_{r-t,r}(x \, | \, a,b) - O(\log r) \tag*{[Lemma \ref{lem:lines}]}\\
&\geq K^D_t(a,b) + K_{r-t,r}(x \, | \, a,b) - O(\log r) \tag*{[Lemma \ref{lem:oracles}]}\\
&\geq K^D_t(a,b) + s(r-t) - O(\log r) \tag*{[Proposition \ref{prop:highBoundX}]}.
\end{align}
By Lemma \ref{lem:oracles}, there are two cases. In the first, $K^D_t(a,b) = \eta r$, and so
\begin{align*}
    K^D_r(u,v) &\geq K^D_t(a,b) + s(r-t) - O(\log r)\\
    &= \eta r + s(r-t) - O(\log r)\\
    &\geq (\eta - \ve)r + s(r-t) \tag*{[$j$ is large]}\\
    &\geq (\eta - \ve)r + \delta(r-t) \tag*{[$\gamma$ is small]}
\end{align*}
In the second case, $K^D_t(a,b) = K_t(a,b)$, and so
\begin{align}
K^D_r(u,v) &\geq K^D_t(a,b) + s(r-t) - O(\log r) \tag*{}\\
&\geq dt - o(t) + s(r - t) - O(\log r)\tag*{[Definition of $\dim$]}\\
&= dh_{j-1} + d(t - h_{j-1}) + s(r - t) -o(r)\tag*{}\\
&= dh_{j-1} + d(t - h_{j-1}) + s(r - h_{j-1}) - s(t-h_{j-1}) - o(r)\tag*{}\\
&= \alpha r +  d(t - h_{j-1}) -  s(t-h_{j-1}) - o(r)\tag*{[Definition of $\alpha$]}\\
&= \eta r + (\alpha - \eta)r + (d - s)(t - h_{j-1}) - o(r)\tag*{}\\
&\geq \eta r + (\alpha - \eta)r- o(r)\tag*{[$d > 1$, $t > h_{j-1}$]}\\
&\geq \eta r + (\alpha - \eta)(r - t) - o(r) \tag*{[$\alpha > \eta$]}\\
&\geq (\eta - \ve)r + \delta (r -t) \tag*{[$j$ is large]}\\
\end{align}

Therefore, in either case, we may apply Lemma \ref{lem:point}, relative to $D$ which yields
\begin{align}
K^D_r(a,b,x) &\leq K_r(x, ax + b) + K_{h_j,r}(a,b,x \, | \, x, ax + b) \tag*{}\\
&\;\;\;\;\;\;\;+\frac{4\ve}{\alpha - \eta}r+K(\ve,\eta)+O(\log r)\tag*{[Lemma \ref{lem:point}]}\\
&\leq K_r(x, ax + b) + dh_j  - h_j + \frac{\gamma h_{j-1}}{2}\tag*{}\\
&\;\;\;\;\;\;\;+\frac{4\ve}{\alpha - \eta}r+K(\ve,\eta)+O(\log r) \tag*{[(\ref{eq:boundHighDim3})]}\\
&\leq K_r(x, ax + b) + dh_j  - h_j + \frac{\gamma h_{j-1}}{2}\tag*{}\\
&\;\;\;\;\;\;\;+\frac{4\ve}{\alpha - \eta}r+O(\log r) \tag*{[(\ref{eq:Ketaconstant})]}\\
&\leq K_r(x, ax + b) + dh_j  - h_j + \frac{\gamma h_{j-1}}{2}\tag*{}\\
&\;\;\;\;\;\;\;+\frac{\gamma r}{4}+O(\log r) \tag*{[(\ref{eq:boundonEpsilonOverAlphaMinusEta})]}\\
&\leq K_r(x, ax + b) + dh_j  - h_j +\frac{3\gamma r}{4}+O(\log r) \label{eq:boundHighDim1}
\end{align}
By Lemma \ref{lem:boundOnXAB}, and our construction of oracle $D$,
\begin{align}
K_r^D(a,b,x) &= K_r^D(a,b) + K_r^D(x \, | \, a,b) - O(\log r) \tag*{[Lemma \ref{lem:unichain}]}\\
&= \eta r + K_r(x \, | \, a,b) - O(\log r) \tag*{[Lemma \ref{lem:oracles}]}\\
&\geq \eta r + sh_j + r - h_j - O(\log r) \tag*{[Lemma \ref{lem:boundOnXAB}(2)]} \\
&\geq \alpha r - \frac{\gamma}{4}r + sh_j + r - h_j - O(\log r) \tag*{}\\
&\geq s(r - h_j) + dh_j - \frac{\gamma}{4}r + sh_j + r - h_j - O(\log r)\tag*{}\\
&\geq (1+s)r - (1 - d)h_j - \frac{\gamma}{4}r \label{eq:boundHighDim2}.
\end{align}

Rearranging (\ref{eq:boundHighDim1}) and combining this with (\ref{eq:boundHighDim2}), we see that
\begin{align*}
    K_r(x, ax+b) &\geq K^D_r(a,b,x) - dh_j  + h_j -\frac{3\gamma r}{4}-O(\log r) \tag*{[(\ref{eq:boundHighDim1})]}\\
    &\geq (1+s)r- (1 - d)h_j - \frac{\gamma}{4}r \\
    &\;\;\;\;\;\;\;\;\;\;\;\;\;\;\;\;- dh_j  + h_j -\frac{3\gamma r}{4}-O(\log r) \tag*{[(\ref{eq:boundHighDim2})]}\\
    &= (1+s)r -\gamma r -O(\log r) \tag*{}
\end{align*}

\end{proof}

We are now able to prove that the Dimension Spectrum Conjecture holds for high dimensional lines.
\begin{proof}[Proof of Theorem \ref{thm:maintheoremHigh}]
Let $(a,b) \in \R^2$ be a slope-intercept pair with 
\begin{center}
$d = \dim(a,b) > 1$.
\end{center}
Let $s \in [0, 1]$. In the case where $s = 0$, Turetsky showed (Theorem \ref{thm:Turetsky}) that $1 \in \spec(L_{a,b})$, i.e., there is a point $x$ such that $\dim(x,ax+b) = 1$. In the case where $s = 1$, Lutz and Stull \cite{LutStu17} showed than any point $x$ which is random relative to $(a,b)$ satisfies
\begin{center}
    $\dim(x, ax+b) = 2$.
\end{center}

Therefore, we may assume that $s \in (0, 1)$. Let $x = x_{a,b,s}$ be the point constructed in this section. By Propositions \ref{prop:highdimupperbound} and \ref{prop:highlowerbound} and Lemma \ref{lem:highlowerboundHSh}, for every $\gamma$, 
\begin{center}
    $\vert \dim(x, ax + b) - (1 + s)\vert < \gamma$.
\end{center}
Thus, by the definition of effective dimension, 
\begin{center}
    $\dim(x, ax + b) = 1 + s$,
\end{center}
and the proof is complete.
\end{proof}

\subsection{Proof of Lemma \ref{lem:goodpointz}}
To complete the proof of the main theorem of this section, we now prove Lemma \ref{lem:goodpointz}. Recall that this states that, for every $j$, after setting $\mathbf{x}[h_{j-1}\ldots sh_j] = \mathbf{y}[h_{j-1}\ldots sh_j] $,  the following holds.

\textbf{Lemma \ref{lem:goodpointz}.} For every sufficiently large $j$ there is a binary string $\mathbf{z}$ of length $h_j - sh_j$ such that
\begin{equation*}
    \min\limits_{sh_j < r \leq h_j} \left| K_r(x, ax+b) - (1+s)r \right| < \frac{r}{j},
\end{equation*}
where $x$ is any real such that $\mathbf{x}\mathbf{z} \prec x$. Moreover, $\mathbf{z}$ is of the form (*) of Section 5.1.

Let $m = h_j - sh_j$. For each $0 \leq n \leq m$, define the binary string $\mathbf{x_n}$ of length $m$ by
\begin{align*}
\mathbf{x_n}[i] = \begin{cases}
0 &\text{ if } 0 \leq i < m - n\\
\frac{1}{a}[i - (m-n)] &\text{ if } m-n\leq i\leq m
\end{cases}
\end{align*}
Thus, for example $\mathbf{x_0}$ is the binary string of $m$ zeros, while $\mathbf{x_m}$ is the binary string containing the $m$-bit prefix of $\frac{1}{a}$. 

Let $x$ be the real number such that $\mathbf{x}\mathbf{x_0} \prec x$, and whose binary expansion contains only zeros after $sh_j$. For each $1 \leq n \leq m$, let $x_n$ be the real number defined by
\begin{center}
    $x_n = x +2^{-h_j + n}/a$.
\end{center}
Therefore, for every $n$,
\begin{center}
    $(x_n, ax_n + b) = (x_n, ax + b + 2^{-h_j + n})$.
\end{center}

Since the binary expansion of $x$ satisfies $x[r] = 0$ for all $r \geq sh_j$, we have, for every $n$,
\begin{equation}\label{eq:xxnprecxn}
    \mathbf{x}\mathbf{x_n} \prec x_n
\end{equation}

In other words, the binary expansion of $x_n$ up to index $h_j$ is just the concatenation of $\mathbf{x}$ and $\mathbf{x_n}$.

We now collect a few facts about our points $x_n$. 
\begin{lem}\label{lem:propertyofxn}
For every $n, r$ such that $0\leq n \leq m$ and $sh_j \leq r \leq h_j$ the following hold.
\begin{enumerate}
\item $K_{n,h_j}(a \, | \, x_n) \leq O(\log h_j)$.
\item For every $n$ and $n^\prime > n$,
\begin{center}
$\vert K_r(x_{n^\prime}, ax_{n^\prime} + b) -K_r(x_n, ax_n + b) \vert < n^\prime - n + \log(r)$.
\end{center}
\item $K_{r - sh_j, r}(a, b \mid x_m, ax_m + b) \leq O(\log r)$.
\end{enumerate}
Note that the constants implied by the big oh notation depend only on $a$.
\end{lem}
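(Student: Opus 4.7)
I would base the entire argument on the structural identity (\ref{eq:xxnprecxn}): for every $n$ with $0 \leq n \leq m$, the first $h_j$ bits of $x_n$ are exactly $\mathbf{x}\mathbf{x_n}$, so positions $h_j - n + 1, \ldots, h_j$ of $x_n$'s binary expansion contain $(1/a) \uhr n$, flanked by zeros in positions $sh_j + 1, \ldots, h_j - n$ and by $\mathbf{x}$ in positions $1, \ldots, sh_j$. All three parts will exploit this encoding of $1/a$ into $x_n$, augmented by Lemmas \ref{lem:trunc} and \ref{lem:unichain} for boundary-carry and conditioning bookkeeping.

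For (1), I would first invoke Lemma \ref{lem:trunc} to reduce the bound to $K(a \uhr n \mid q) \leq O(\log h_j)$ uniformly over $q \in B_{2^{-h_j}}(x_n) \cap \Q$. Given such $q$, the plan is to extract bits in positions $h_j - n + 1, \ldots, h_j$ of $q \uhr h_j$; these equal $(1/a) \uhr n$ up to an $O(1)$-length carry chain whose correction is specified in $O(\log h_j)$ bits. Inverting with the a-priori bound on $|a|$ yields $a \uhr n$. Part (3) I would handle as the same extraction at $n = m$ extended to intermediate precisions $sh_j \leq r \leq h_j$: the bits in positions $sh_j + 1, \ldots, r$ of an $x_m$-approximation reveal $(1/a) \uhr (r - sh_j)$ and hence $a$ at precision $r - sh_j$; combining with $b = (ax_m + b) - a x_m$ and Lemma \ref{lem:unichain}(i) gives the conditional bound on $(a, b)$ jointly, exactly as in Lemma \ref{lem:computeAB}.

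For (2), I would expand
\begin{equation*}
x_{n'} - x_n = (2^{n'} - 2^n) 2^{-h_j}/a, \qquad ax_{n'} - ax_n = (2^{n'} - 2^n) 2^{-h_j}.
\end{equation*}
The second-coordinate update is a rational computable in $O(\log r)$ bits. Computing the first-coordinate update to precision $2^{-r}$ demands $1/a$ to precision $p := \max(0, r - h_j + n')$. The precision-$r$ approximation of $x_n$ already exposes $\max(0, r - h_j + n)$ bits of $1/a$ (those in positions $h_j - n + 1, \ldots, \min(r, h_j)$), so the shortfall is at most $n' - n$ bits. Appending these along with $O(\log r)$ bookkeeping to a shortest program for $(x_n, ax_n + b)$ at precision $r$ yields a program for $(x_{n'}, ax_{n'} + b)$ at precision $r$; the reverse direction is easier, since $x_{n'}$ itself encodes exactly $p$ bits of $1/a$, so no $1/a$-bits need be supplied. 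The hard part will be verifying the bit-count uniformly across the regimes $r < h_j - n'$ (update below precision), $h_j - n' \leq r < h_j - n$ (update nonnegligible but $x_n$ reveals no $1/a$-bits, and yet $p \leq n' - n$), and $r \geq h_j - n$ (shortfall exactly $n' - n$), which ultimately reduces to the identity $\max(0, r - h_j + n') - \max(0, r - h_j + n) \leq n' - n$.
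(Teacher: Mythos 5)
Your proposal is correct and takes essentially the same route as the paper: parts (1) and (3) use the bi-Lipschitz map $a\mapsto 1/a$ together with the truncation lemmas just as the paper does, and for (2) your ``shortfall'' count of exposed versus required bits of $1/a$ arrives at the same $n'-n+O(\log r)$ bound that the paper obtains by comparing the strings $\mathbf{x}\mathbf{x_n}$ and $\mathbf{x}\mathbf{x_{n'}}$ directly, with your $p=0$ regime corresponding to the paper's explicit case split at $r<h_j-n'$. (One small caveat for both you and the paper: the stated bound $<n'-n+\log r$ should really be read as $n'-n+O(\log r)$, which is what each argument actually delivers and what the subsequent application in Lemma~\ref{lem:goodpointz} uses.)
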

\begin{proof}
From the definition of $\mathbf{x_n}$,
\begin{center}
$\mathbf{x_n}[(m-n)\ldots m] = \frac{1}{a}[0\ldots n]$.
\end{center}
Since the map $a \mapsto \frac{1}{a}$ is bi-Lipschitz on an interval, there some constant $c$ depending only on $a$, such that we can compute a $2^{-n + c}$-approximation of $a$ given the first $n$ bits of $1/a$. Thus, by Corollary \ref{cor:trunccond},
\begin{align*}
    K_{n, h_j}(a \mid x_n) &= K(a \uhr n \mid x_n \uhr h_j) + O(\log h_j)\tag*{[Corollary \ref{cor:trunccond}]}\\
    &=  K(a \uhr n \mid \mathbf{x}\mathbf{x_n} )+ O(\log h_j) \tag*{[(\ref{eq:xxnprecxn})]}\\
    &\leq  K(a \uhr n \mid \mathbf{x_n})+ O(\log h_j) \tag*{}\\
     &\leq  O(\log h_j).
\end{align*}

For item (2), let $n^\prime > n$. We first assume that $r < h_j - n^\prime$. Then 
\begin{align*}
\vert x_{n^\prime} - x_n \vert &= x + 2^{-h_j + n^\prime}/a - (x + 2^{-h_j + n}/a )\\
&= 2^{-h_j}(2^{n^\prime} - 2^{n})/a )\\
&= 2^{-h_j + n^\prime}(1 - 2^{n - n^\prime})/a \\
&\leq 2^{-h_j + n^\prime}/2a\\
&\leq O(2^{-h_j + n^\prime}).
\end{align*}
Thus,
\begin{align*}
    \|(x_{n^\prime}, ax_{n^\prime} + b) -(x_n, ax_n + b)\| &= \|(x_{n^\prime}, ax + b + 2^{-h_j + n^\prime}) \\
    &\;\;\;\;\;\;\;\;\;\;\;\;-(x_n, ax + b + 2^{-h_j + n^\prime})\|\\
    &\leq O(2^{-h_j + n^\prime}).
\end{align*}
Therefore, under our assumption that $r<h_j - n^\prime$,
\begin{center}
$K_r(x_{n^\prime}, ax_{n^\prime} + b) = K_r(x_n, ax_n + b) - O(1)$,
\end{center}
and the claim follows. Now assume that $r \geq h_j - n^\prime$. By Corollary \ref{cor:trunc}, 
\begin{align*}
    K_r(x_{n^\prime}, ax_{n^\prime} + b) &= K_r(x_{n^\prime}, ax + b + 2^{-h_j + n^\prime})\\
    &= K_r(\mathbf{x}\mathbf{x_{n^\prime}}, ax + b + 2^{-h_j + n^\prime}).
\end{align*}
Similarly,
\begin{align*}
    K_r(x_{n}, ax_{n} + b) &= K_r(x_{n}, ax + b + 2^{-h_j + n})\\
    &= K_r(\mathbf{x}\mathbf{x_n}, ax + b + 2^{-h_j + n}).
\end{align*}
By our definition of $\mathbf{x_n}$ and $\mathbf{x_{n^\prime}}$, $\mathbf{x_{n^\prime}}$ contains all the bits of $\frac{1}{a}$ that $\mathbf{x_{n}}$ has. Thus,
\begin{center}
    $K(\mathbf{x_n} \uhr r \mid \mathbf{x_{n^\prime}} \uhr r) = O(\log n^\prime)$.
\end{center}
Given the first $r$ bits of $\mathbf{x_n}$, we can compute the first $r$ bits of $\mathbf{x_{n^\prime}}$ by simply giving the $n^\prime - n$ bits of $\frac{1}{a}$ that $\mathbf{x_{n^\prime}}$ contains but $\mathbf{x_{n}}$ does not. Thus,
\begin{center}
    $K(\mathbf{x_{n^\prime}} \uhr r \mid \mathbf{x_n} \uhr r) \leq n^\prime - n + O(\log n^\prime)$.
\end{center}
Similarly, since $ax_n + b$ and $ax_{n^\prime} + b$ are translates of each other, to compute one given the other, we just need to give the logarithm of their distance. Therefore,
\begin{align*}
    K_r(ax + b + 2^{-h_j + n^\prime} \, | \, ax + b + 2^{-h_j + n}) &\leq O(\log h_j + \log n^\prime)\\
    K_r(ax + b + 2^{-h_j + n} \, | \, ax + b + 2^{-h_j + n^\prime}) &\leq O(\log h_j + \log n^\prime)\\
\end{align*}
Taken together, these four inequalities show that
\begin{align*}
    K_r(x_n, ax_n + b \mid x_{n^\prime}, ax_{n^\prime} + b) &= K_r(x_n\mid  x_{n^\prime}, ax_{n^\prime}+ b) \\
    &\;\;\;\;\;\;\;\;\;\;\;\; + K_r(ax_n + b \mid x_{n^\prime}, ax_{n^\prime} + b)\\
    &\leq O(\log n^\prime) + O(\log h_j + \log n^\prime).
\end{align*}
Similarly, 
\begin{align*}
    K_r(x_{n^\prime}, ax_{n^\prime} + b \mid x_{n}, ax_{n} + b) &= K_r(x_{n^\prime}\mid  x_{n}, ax_{n} + b)\\
    &\;\;\;\;\;\;\;\;\;\;\;+ K_r(ax_{n^\prime} + b \mid x_{n}, ax_{n} + b)\\
    &\leq n^\prime - n + O(\log n^\prime) + O(\log h_j + \log n^\prime),
\end{align*}
and the proof of the second item is complete.

For item (3), note that the first $r$ bits of $x_m$, 
\begin{center}
    $x_m \uhr r = \mathbf{x}\mathbf{x_m}[0\ldots (r - sh_j)]$.
\end{center}
By definition, $\mathbf{x_m}[0\ldots (r - sh_j)] = \frac{1}{a}[0\ldots (r-sh_j)]$. Thus, given a $2^{-r}$ approximation of $x_m$, we can compute a $2^{-r + sh_j + c}$-approximation of $a$. Finally, given  a $2^{-r + sh_j + c}$-approximation of $x_m$, $a$ and $ax_m + b$, we can compute a $2^{-r + sh_j + c}$-approximation of $b$.
\end{proof}

\begin{proof}[Poof of Lemma \ref{lem:goodpointz}]
The proof of this lemma is essentially (a discrete, approximate, version of) the mean value theorem. In particular, we will show that $x_0$ satisfies 
\begin{center}
    $K_r(x_0, ax_0 + b) \leq (1 + s)r$,
\end{center}
for every $r \in [sh_j, h_j]$, and  $x_m$ satisfies
\begin{center}
    $K_r(x_m, ax_m+ b) \geq (1 + s)r$,
\end{center}
for every $r\in [sh_j, h_j]$. By Lemma \ref{lem:propertyofxn}, the map $n \mapsto K_r(x_n, ax_n +b)$ is ``continuous", and therefore there must be a point $x_n$ which satisfies 
\begin{center}
    $ \min\limits_{sh_j < r \leq h_j} \left| K_r(x, ax+b) - (1+s)r \right| < \frac{r}{j}$.
\end{center}

We now formalize this intuition. For each $n$, define
\begin{center}
$M_n = \min\{ \frac{K_r(x_n, ax_n + b)}{r} \, | \, sh_j \leq r \leq h_j\}$.
\end{center}
To prove our theorem, it suffices to show that there is an $n$ such that
\begin{equation*}\label{eq:findN}
1 + s - \frac{1}{j} \leq M_n \leq 1+s + \frac{1}{j}.
\end{equation*}

To begin, note, by our construction of $h_j$, $j < \log h_j$. We also note that $K_{h_j}(x_0) \leq sh_j$, since the bits of $x_0$ after $sh_j$ are all zero. Therefore,
\begin{align*}
    K_{h_j}(x_0, ax_0 + b) &= K_{h_j}(x_0) + K_{h_j}(ax_0 + b \mid x_0) + O(\log h_j) \tag*{[Lemma \ref{lem:unichain}]}\\
    &\leq sh_j + K_{h_j}(ax_0 + b \mid x_0) + O(\log h_j)\\
    &\leq sh_j + K_{h_j}(ax_0 + b) + O(\log h_j)\\
    &\leq sh_j + h_j + O(\log h_j)\\
\end{align*}
where the constant of the $O(\log h_j)$ term is independent of $x_0$. Thus, $M_0 < 1 + s + \frac{1}{j}$, for all  $j$ such that $\frac{h_j}{j} > O(\log h_j)$.

We now show that
\begin{center}
$K_r(x_m, ax_m + b) \geq (1+s)r - \frac{r}{j}$,
\end{center}
for all sufficiently large $j$, and for every $sh_j \leq r \leq h_j$. 

Let $sh_j \leq r \leq h_j$, $\eta = 1 - \frac{1}{2j}$ and let $D = D(r, (a,b), \eta$. Let $\epsilon = \frac{1}{32j^2}$ and $\delta = 1-\eta$. We now show that the conditions of Lemma \ref{lem:point} are satisfied.

Let $(u,v)$ be a line such that $t := \|(a,b) - (u,v)\| \geq r - sh_j$, and $ux_m + v = ax_m + b$. Note that $r - t \leq sh_j$. Then, by Lemma \ref{lem:lines}, Lemma \ref{lem:oracles}  and Lemma \ref{lem:boundOnXAB}(1), 
\begin{align}
K^D_r(u,v) &\geq K^D_t(a,b) + K^D_{r-t,r}(x_m \, | \, a,b) - O(\log r)\tag*{[Lemma \ref{lem:lines}]}\\
&\geq K^D_t(a,b) + K_{r-t,r}(x_m \, | \, a,b) - O(\log r)\tag*{[Lemma \ref{lem:oracles}]}\\
&= \min\{\eta r, K_t(a,b)\} + K_{r-t,r}(x_m \, | \, a,b) - O(\log r)\tag*{[Lemma \ref{lem:oracles}]}\\
&= \min\{\eta r, K_t(a,b)\} + r - t - O(\log r)\tag*{[construction of $x$]}.
\end{align}
There are two cases guaranteed by Lemma \ref{lem:oracles}. If $\eta r \leq K_t(a,b)$, then 
\begin{align*}
K^D_r(u,v) &\geq \eta r + r - t - O(\log r)\\
&\geq (\eta - \epsilon) r + r - t\tag*{[$j$ is large]}\\
&\geq (\eta - \epsilon) r + \delta(r - t) \tag*{[$\delta < 1$]},
\end{align*}
since $\ve r \geq O(\log r)$. Hence, we may apply Lemma \ref{lem:point} in this case. Otherwise,  $K^D_t(a,b)  = K_t(a,b)$. Then
\begin{align*}
K^D_r(u,v) &\geq K_t(a,b) + r - t - O(\log r)\\
&\geq (d - 2^{-j})t + r - t - O(\log r)\tag*{[Definition of $h_j$]}\\
&= \eta r + (1-\eta)r - t(1 + 2^{-j} - d) - O(\log r) \tag*{}\\
&\geq \eta r + (1-\eta)r - t(1 - \eta) - O(\log r) \tag*{}\\
&= \eta r + \delta( r - t) - O(\log r)\\
&\geq (\eta - \ve)r + \delta(r-t).
\end{align*}

Therefore we may apply Lemma \ref{lem:point}. 
\begin{align}
K_r(x_m, ax_m + b) &\geq K_r^D(a,b,x_m)- K_{r-sh_j,r}(a,b \, | , x_m, ax_m +b)\tag*{[Lemma \ref{lem:point}]}\\
&\;\;\;\;\;\;\;\;\;\;\;\;\;\;\;\;\;\;\;\;-\frac{4\ve}{1-\eta}r-K(\epsilon, \eta) - O(\log r) \tag*{}\\
&= K_r^D(a,b,x_m) - O(\log r) \tag*{[Lemma \ref{lem:propertyofxn}(3)]}\\
&\;\;\;\;\;\;\;\;\;\;\;\;\;\;\;\;\;\;-\frac{4\ve}{1-\eta}r- K(\epsilon, \eta) - O(\log r) \tag*{}\\
&= K_r^D(a,b,x_m) - O(\log r) -\frac{1/8j^2}{1/2j}r- K(\epsilon, \eta) - O(\log r) \tag*{}\\
&= K_r^D(a,b,x_m) - O(\log r) -\frac{r}{4j}- K(\epsilon, \eta) - O(\log r) \tag*{}\\
&= K_r^D(a,b,x_m) - O(\log r) -\frac{r}{4j}- K(j) - O(\log r) \tag*{}\\
&= K_r^D(a,b,x_m) -\frac{r}{4j} - O(\log r) \tag*{}\\
&= K_r^D(a,b,x_m) - \frac{r}{2j}\tag*{}.
\end{align}
By our construction of oracle $D$, and the symmetry of information,
\begin{align}
K_r^D(a,b,x_m) &= K_r^D(a,b) + K_r^D(x_m \, | \, a,b) - O(\log r) \tag*{[Lemma \ref{lem:unichain}]}\\
&= K_r^D(a,b) + K_r(x_m \, | \, a,b) - O(\log r)\tag*{[Lemma \ref{lem:oracles}(ii)]} \\
&\geq \eta r + K_{r}(x_m \, | \, a,b) - O(\log r) \tag*{[Lemma \ref{lem:oracles}(i)]}\\
&\geq \eta r + sh_j - O(\log r) \label{eq:boundABXupToHa}.
\end{align}
Therefore, we have
\begin{align*}
K_r(x_m, ax_m + b) &\geq \eta r + sh_j - O(\log r) - \frac{r}{2j}\\
&\geq (1 + s)r - \frac{r}{2j}  - \frac{r}{2j}\\
&> (1+s)r - \frac{r}{j}.
\end{align*}
Thus we have shown that 
\begin{center}
$K_r(x_m, ax_m + b) > (1+s)r - \frac{r}{j}$,
\end{center}
for every $sh_j \leq r \leq h_j$.

If either
\begin{itemize}
\item $M_m \leq (1 + s) + \frac{1}{j} $, or
\item $M_0 \geq (1 + s) - \frac{1}{j} $
\end{itemize}
our claim would follow. We will therefore assume that
\begin{center}
$M_0 < 1 + s - \frac{1}{j}$, and $1 + s + \frac{1}{j} < M_m$.
\end{center}

Define
\begin{align*}
L &= \{n \, | \, M_n < 1 + s - \frac{1}{j}\}\\
G &= \{n \, | \, M_n > 1 + s + \frac{1}{j}\}.
\end{align*}
By our assumption, $L$ and $G$ are non-empty. Suppose that $L$ and $G$ partition $\{0,\ldots, m\}$. Then there is a $n$ such that $n\in L$ and $n+1\in G$. However, by Lemma \ref{lem:propertyofxn},
\begin{center}
$\vert K_r(x_{n+1}, ax_{n+1} + b) -K_r(x_n, ax_n + b) \vert \leq 1 + O(\log r)$,
\end{center}
for every $r$. Let $r$ be a precision testifying to $x_n\in L$. Then
\begin{align*}
    (1 + s - \frac{1}{j})r &> K_r(x_n, ax_n + b)\\
    &> K_r(x_{n+1}, ax_{n+1} + b) - 1 - O(\log r).
\end{align*}
That is,
\begin{align*}
    \frac{K_r(x_{n+1}, ax_{n+1} + b)}{r} &< 1 + s - \frac{1}{j} + \frac{1}{r} + \frac{O(\log r)}{r}\\
    &< 1 + s + \frac{1}{j},
\end{align*}
which contradicts our assumption that $x_{n+1} \in G$ and the proof is complete.
\end{proof}

\section{Conclusion and Future Directions}\label{sec:conclusion}
The behavior of the effective dimension of points on a line is not only interesting from the algorithmic randomness viewpoint, but also because of its deep connections to geometric measure theory. There are many avenues for future research in this area.

The results of this paper show that, for any line $L_{a,b}$, the dimension spectrum $\spec(L_{a,b})$ contains a unit interval. However, this is not, in general, a tight bound. It would be very interesting to have a more thorough understanding of the ``low end" of the dimension spectrum. Stull \cite{Stull18} showed that the Hausdorff dimension of points $x$ such that
\begin{center}
$\dim(x, ax + b) \leq \alpha + \frac{\dim(a,b)}{2}$
\end{center}
is at most $\alpha$. Further investigation of the low-end of the spectrum is needed.

It seems plausible that, for certain lines, the dimension spectrum contains an interval of length greater than one. For example, are there lines in the plane such that $\spec(L)$ contains an interval of length strictly greater than $1$?

Another interesting direction is to study the dimension spectrum of particular classes of lines. One natural class is the lines $L_{a,b}$ whose slope and intercept are both in the Cantor set. By restricting the lines to the Cantor set, or, more generally, self-similar fractals, might give enough structure to prove tight bounds not possible in the general case.

Additionally, the focus has been on the effective (Hausdorff) dimension of points. Very little is known about the  effective \textit{strong} dimension of points on a line. The known techniques do not seem to apply to this question. New ideas are needed to understand the strong dimension spectrum of planar lines.

Finally, it would be interesting to broaden this direction by considering the dimension spectra of other geometric objects. For example, can anything be said about the dimension spectrum of a polynomial?

\bibliography{DSCPL}

\begin{thebibliography}{10}

\bibitem{DouLutMauTeut14}
Randall Dougherty, Jack Lutz, R~Daniel Mauldin, and Jason Teutsch.
\newblock Translating the {C}antor set by a random real.
\newblock {\em Transactions of the American Mathematical Society},
  366(6):3027--3041, 2014.

\bibitem{DowHir10}
Rod Downey and Denis Hirschfeldt.
\newblock {\em Algorithmic Randomness and Complexity}.
\newblock Springer-Verlag, 2010.

\bibitem{GuLutMayMos14}
Xiaoyang Gu, Jack~H Lutz, Elvira Mayordomo, and Philippe Moser.
\newblock Dimension spectra of random subfractals of self-similar fractals.
\newblock {\em Annals of Pure and Applied Logic}, 165(11):1707--1726, 2014.

\bibitem{LiVit08}
Ming Li and Paul~M.B. Vit\'{a}nyi.
\newblock {\em An Introduction to {K}olmogorov Complexity and Its
  Applications}.
\newblock Springer, third edition, 2008.

\bibitem{Lutz03a}
Jack~H. Lutz.
\newblock Dimension in complexity classes.
\newblock {\em {SIAM} J. Comput.}, 32(5):1236--1259, 2003.

\bibitem{Lutz03b}
Jack~H. Lutz.
\newblock The dimensions of individual strings and sequences.
\newblock {\em Inf. Comput.}, 187(1):49--79, 2003.

\bibitem{LutLut17}
Jack~H. Lutz and Neil Lutz.
\newblock Algorithmic information, plane {K}akeya sets, and conditional
  dimension.
\newblock {\em ACM Trans. Comput. Theory}, 10(2):Art. 7, 22, 2018.

\bibitem{LutLut20}
Jack~H Lutz and Neil Lutz.
\newblock Who asked us? how the theory of computing answers questions about
  analysis.
\newblock In {\em Complexity and Approximation}, pages 48--56. Springer, 2020.

\bibitem{LutMay08}
Jack~H. Lutz and Elvira Mayordomo.
\newblock Dimensions of points in self-similar fractals.
\newblock {\em SIAM J. Comput.}, 38(3):1080--1112, 2008.

\bibitem{Lutz17}
Neil Lutz.
\newblock Fractal intersections and products via algorithmic dimension.
\newblock In {\em 42nd International Symposium on Mathematical Foundations of
  Computer Science (MFCS 2017)}, 2017.

\bibitem{LutStu17}
Neil Lutz and D.~M. Stull.
\newblock Bounding the dimension of points on a line.
\newblock In {\em Theory and applications of models of computation}, volume
  10185 of {\em Lecture Notes in Comput. Sci.}, pages 425--439. Springer, Cham,
  2017.

\bibitem{LutStu17b}
Neil Lutz and D.~M. Stull.
\newblock Dimension spectra of lines.
\newblock In {\em Unveiling dynamics and complexity}, volume 10307 of {\em
  Lecture Notes in Comput. Sci.}, pages 304--314. Springer, Cham, 2017.

\bibitem{LutStu18}
Neil Lutz and D.~M. Stull.
\newblock Projection theorems using effective dimension.
\newblock In {\em 43rd International Symposium on Mathematical Foundations of
  Computer Science (MFCS 2018)}, 2018.

\bibitem{Mayordomo02}
Elvira Mayordomo.
\newblock A {K}olmogorov complexity characterization of constructive
  {H}ausdorff dimension.
\newblock {\em Inf. Process. Lett.}, 84(1):1--3, 2002.

\bibitem{Mayordomo08}
Elvira Mayordomo.
\newblock Effective fractal dimension in algorithmic information theory.
\newblock In S.~Barry Cooper, Benedikt L{\"o}we, and Andrea Sorbi, editors,
  {\em New Computational Paradigms: Changing Conceptions of What is
  Computable}, pages 259--285. Springer New York, 2008.

\bibitem{Nies09}
Andre Nies.
\newblock {\em Computability and Randomness}.
\newblock Oxford University Press, Inc., New York, NY, USA, 2009.

\bibitem{Stull18}
D.~M. Stull.
\newblock Results on the dimension spectra of planar lines.
\newblock In {\em 43rd {I}nternational {S}ymposium on {M}athematical
  {F}oundations of {C}omputer {S}cience}, volume 117 of {\em LIPIcs. Leibniz
  Int. Proc. Inform.}, pages Art. No. 79, 15. Schloss Dagstuhl. Leibniz-Zent.
  Inform., Wadern, 2018.

\bibitem{Turetsky11}
Daniel Turetsky.
\newblock Connectedness properties of dimension level sets.
\newblock {\em Theor. Comput. Sci.}, 412(29):3598--3603, 2011.

\end{thebibliography}

\end{document}